\newif\ifEntropy
  \pgfplotsset{articleplot/.style={
          axis x line=bottom,
          axis y line=left,
          axis line style={lightgray},
          y axis line style={lightgray},
          x axis line style={lightgray,-}
        }}
\newtheorem{lemma}{Lemma}
\newtheorem{prop}[lemma]{Proposition}
\newtheorem{thm}[lemma]{Theorem}
\newtheorem*{thm*}{Theorem}
\newtheorem{cor}[lemma]{Corollary}
\theoremstyle{definition}
\newtheorem{defn}[lemma]{Definition}
\theoremstyle{remark}
\newtheorem{rem}[lemma]{Remark}
\newtheorem{ex}[lemma]{Example}
\newcommand{\Rb}{\mathbb{R}}
\newcommand{\Acal}{\mathcal{A}}
\newcommand{\Scal}{\mathcal{S}}
\newcommand{\Xcal}{\mathcal{X}}
\newcommand{\Ycal}{\mathcal{Y}}
\newcommand{\Zcal}{\mathcal{Z}}
\newcommand{\uge}{\sqsupseteq}  
\newcommand{\Dpinn}{\Delta_{P}^{\ast}}
\newcommand{\TCI}{\widetilde{CI}}
\newcommand{\TSI}{\widetilde{SI}}
\newcommand{\TUI}{\widetilde{UI}}
\DeclareMathOperator*{\bigtimes}{\textnormal{\Large $\times$}} 
\DeclareMathOperator{\XOR}{\textnormal{XOR}}
\newcommand\Perp{\protect\mathpalette{\protect\independenT}{\perp}}
\def\independenT#1#2{\mathrel{\rlap{$#1#2$}\mkern2mu{#1#2}}}
\newcommand{\ind}[3][]{\left.#2 \Perp\!_{#1}\, #3 \inD}
\newcommand{\inD}[1][\relax]{\def\argone{#1}\def\temprelax{\relax}
  \ifx\argone\temprelax\right.\else\,\middle|#1\right.{}\fi}
\newcommand{\IHSP}{I_{\text{red}}}
\newcommand{\UIHSP}{UI_{\text{red}}}
\newcommand{\Imin}{I_{\min}}
\newcommand{\noPDF}[2]{\texorpdfstring{#1}{#2}}
\address{\noPDF{$^{1}$}{(1) }Max Planck Institute for Mathematics in the Sciences, Inselstra\ss{}e 23, 04109 Leipzig, Germany\\
  \noPDF{$^{2}$}{(2) }Santa Fe Institute, 1399 Hyde Park Rd, Santa Fe, NM 87501, USA}
\abstract{ %
  We propose new measures of shared information, unique information and synergistic information that can be used to
  decompose the multi-information of a pair of random variables $(Y,Z)$ with a third random variable~$X$.  Our measures
  are motivated by an operational idea of unique information which suggests that shared information and unique
  information should depend only on the pair marginal distributions of $(X,Y)$ and~$(X,Z)$.  Although this invariance
  property has not been studied before, it is satisfied by other proposed measures of shared information.  The
  invariance property does not uniquely determine our new measures, but it implies that the functions that we define are
  bounds to any other measures satisfying the same invariance property.  We study properties of our measures and compare
  them to other candidate measures.}
\title{Quantifying unique information}
\author{{\normalsize Nils Bertschinger$^{1}$, Johannes Rauh$^{1}$, Eckehard Olbrich$^{1}$, Jürgen Jost$^{1,2}$, Nihat Ay$^{1,2}$} \\
{\small $^{1}$Max Planck Institute for Mathematics in the Sciences, Leipzig, Germany} \\
{\small $^{2}$Santa Fe Institute, Santa Fe, USA}}
\date{October 22, 2013}
\begin{document}
\maketitle

\ifEntropy\else
\begin{abstract}
  We propose new measures of shared information, unique information and synergistic information that can be used to
  decompose the multi-information of a pair of random variables $(Y,Z)$ with a third random variable~$X$.  Our measures
  are motivated by an operational idea of unique information which suggests that shared information and unique
  information should depend only on the pair marginal distributions of $(X,Y)$ and~$(X,Z)$.  Although this invariance
  property has not been studied before, it is satisfied by other proposed measures of shared information.  The
  invariance property does not uniquely determine our new measures, but it implies that the functions that we define are
  bounds to any other measures satisfying the same invariance property.  We study properties of our measures and compare
  them to other candidate measures.

  \medskip
  \noindent
  \textbf{Keywords:} Shannon information, mutual information, information decomposition, shared information, synergy
\end{abstract}
\fi

\section{Introduction}
\label{sec:introduction}

Consider three random variables $X,Y,Z$ with finite state spaces $\Xcal,\Ycal,\Zcal$.  Suppose that we are interested in
the value of~$X$, but we can only observe $Y$ or~$Z$.  If the tuple $(Y,Z)$ is not independent of~$X$, then the values
of $Y$ or $Z$ or both of them contain information about~$X$.  The information about $X$ contained in the tuple $(Y,Z)$
can be distributed in different ways.  For example, it may happen that $Y$ contains information about~$X$, but $Z$ does
not, or vice versa.  In this case, it would suffice to observe only one of the two variables $Y,Z$, namely the one
containing the information.  It may also happen, that $Y$ and $Z$ contain different information, so it would be
worthwhile to observe both of the variables.  If $Y$ and $Z$ contain the same information about~$X$, we could
choose to observe either $Y$ or~$Z$.  Finally, it is possible that neither $Y$ nor $Z$ taken for itself contains any
information about~$X$, but together they contain information about~$X$.  This effect is called \emph{synergy}, and it
occurs, for example, if all variables $X,Y,Z$ are binary, and $X = Y\XOR Z$.  In general, all effects may be present at
the same time.  That is, the information that $(Y,Z)$ has about~$X$ is a mixture of \emph{shared information}
$SI(X:Y;Z)$ (that is, information contained both in $Y$ and in $Z$), \emph{unique information} $UI(X:Y\setminus Z)$ and
$UI(X:Z\setminus Y)$ (that is, information that only one of $Y$ and $Z$ has) and \emph{synergistic} or
\emph{complementary information} $CI(X:Y;Z)$ (that is, information that can only be retrieved when considering $Y$ and
$Z$ together).  It is often assumed that these three types of information are everything there is, but one may ask, of
course, whether there are further types of information.

The total information that $(Y,Z)$ has about $X$ can be quantified by the mutual information $MI(X:(Y,Z))$.  Decomposing $MI(X:(Y,Z))$ into shared information, unique information and synergistic information leads to four terms, as
\begin{equation}
  \label{eq:MI-decomposition}
  MI(X:(Y,Z)) = SI(X:Y;Z) + UI(X:Y\setminus Z) + UI(X:Z\setminus Y) + CI(X:Y;Z).
\end{equation}
The interpretation of the four terms as informations demands that they
should all be positive. Furthermore, it suggests that the following
identities also hold:
\begin{equation}
  \label{eq:MI-decomposition-2}
  \begin{split}
    MI(X:Y) &= SI(X:Y;Z) + UI(X:Y\setminus Z), \\
    MI(X:Z) &= SI(X:Y;Z) + UI(X:Z\setminus Y).
  \end{split}
\end{equation}
In the following, when we talk about a \emph{binary information decomposition} we mean a set of three functions $SI$,
$UI$ and $CI$ that satisfy~\eqref{eq:MI-decomposition} and~\eqref{eq:MI-decomposition-2}.

Combining the three equalities in~\eqref{eq:MI-decomposition} and~\eqref{eq:MI-decomposition-2} and using the chain rule
of mutual information
\[
MI(X:(Y,Z))= MI(X:Y)+ MI(X:Z|Y)
\]
yields the identity
\begin{multline}
  \label{eq:CoI}
  CoI(X;Y;Z) := MI(X:Y) - MI(X:Y|Z) \\
  = MI(X:Y) + MI(X:Z) - MI(X:(Y,Z)) = SI(X:Y;Z) - CI(X:Y;Z),
\end{multline}
which identifies the \emph{co-information}\footnote{The co-information was originally called \emph{interaction
    information} in~\cite{McGill54:Multivariate_information_transmission}.} with the difference of shared information
and synergistic in\-for\-ma\-tion.  It has been known for a long time, that a positive co-information is a sign of redundancy,
while a negative co-information expresses synergy~\cite{McGill54:Multivariate_information_transmission}.  However,
although there have been many attempts, as of currently, there has been no fully satisfactory solution to separate the
redundant and synergistic contributions to the co-information, and also a fully satisfying definition of the function
$UI$ is still missing.
Observe that, since we have three equations relating the four functions~$SI(X:Y;Z)$, $UI(X:Y\setminus Z)$,
$UI(X:Z\setminus Y)$ and~$CI(X:Y;Z)$, it suffices to specify one of them to compute the others.  When defining a
solution for the unique information~$UI$, this leads to the consistency equation
\begin{equation}
  \label{eq:union-info}
  MI(X:Z) + UI(X:Y\setminus Z) = MI(X:Y) + UI(X:Z\setminus Y).
\end{equation}
The value of~\eqref{eq:union-info} can be interpreted as the \emph{union information}, that is, the union of the
informations contained in $Y$ and in $Z$ without the synergy.

The problem to separate the contributions of shared information and synergistic information to the co-information is
probably as old as the definition of co-information itself.  Nevertheless, the co-information has been widely used as a
measure of synergy in the neurosciences; see, for
example,~\cite{SchneidmanBialekBerry03:Synergy_and_redundancy_in_population_codes,LathamNirenberg05:Synergy_and_redundancy_revisited}
and references therein.  The first general attempt to construct a consistent information decomposition into terms
corresponding to different combinations of shared and synergistic information is due to Williams and
Beer~\cite{WilliamsBeer:Nonneg_Decomposition_of_Multiinformation}.  See also the references
in~\cite{WilliamsBeer:Nonneg_Decomposition_of_Multiinformation} for other approaches to study multivariate information.
While the general approach of~\cite{WilliamsBeer:Nonneg_Decomposition_of_Multiinformation} is intriguing, the proposed
measure of shared information $\Imin$ suffers from serious flaws, which prompted a series of other papers trying to
improve these
results~\cite{GriffithKoch13:Quantifying,HarderSalgePolani13:Bivariate_redundant_information,BROJ13:Shared_information}.

In our current contribution, we propose to define the unique information as follows: 
Let $\Delta$  
be the set of all joint distributions of $X$, $Y$ and~$Z$.  Define
\begin{multline*}
  \Delta_{P} = \Big\{ Q\in\Delta : Q(X=x,Y=y)=P(X=x,Y=y)\\
  \text{ and }Q(X=x,Z=z)=P(X=x,Z=z)\text{ for all }x\in\Xcal,y\in\Ycal,z\in\Zcal \Big\}
\end{multline*}
as the set of all joint distributions which have the same marginal distributions on the pairs $(X,Y)$ and $(X,Z)$.
Then we define
\begin{align*}
  \TUI(X:Y\setminus Z)
  & = \min_{Q\in\Delta_{P}} MI_{Q}(X:Y|Z), \\
  \intertext{%
    where $MI_{Q}(X:Y|Z)$ denotes the conditional multi-information of $X$ and $Y$ given~$Z$, computed with respect to
    the joint distribution~$Q$.  Equation~\eqref{eq:CoI} implies %
  }
  \TSI(X:Y;Z) &= \max_{Q\in\Delta_{P}} CoI_{Q}(X;Y;Z), \\
  \TCI(X:Y;Z) &= MI(X:(Y,Z)) - \min_{Q\in\Delta_{P}}MI_{Q}(X:(Y,Z)).
\end{align*}
In Section~\ref{sec:properties} we show that the four functions $\TUI$, $\TSI$ and $\TCI$ are non-negative, and we study
further properties.  In Appendix~\ref{sec:computing} we describe the set $\Delta_{P}$ in terms of a parametrization.

Our approach is motivated by the idea that unique and shared information should only depend on the marginal distribution
of the pairs $(X,Z)$ and~$(X,Y)$.  This idea can be explained from an operational interpretation of unique information:
Namely, if $Y$ has unique information about $X$ (with respect to~$Z$), then there must be some way to extract this
information.  More precisely, there must be a situation in which $Y$ can use this information to perform better at
predicting the outcome of~$X$.
We make this idea precise in Section~\ref{sec:motivation} and show how it naturally leads to the definition of the
functions $\TUI$, $\TSI$ and~$\TCI$, as defined above.  Section~\ref{sec:properties} contains basic properties of these
three functions.  In particular, Lemma~\ref{lem:nonneg} shows that all three functions are non-negative.
Corollary~\ref{cor:Blackwell} proves that the function $\TUI$ is consistent with the operational idea put forward in
Section~\ref{sec:motivation}.  In Section~\ref{sec:comparison-with-I_HSP} we compare our function with other proposed
information decompositions.  Some examples are studied in Section~\ref{sec:examples}.  Remaining open problems are
discussed in Section~\ref{sec:several-variables}.  The appendix contains some more technical aspects that help to
compute the functions $\TUI$, $\TSI$ and~$\TCI$.

\section{Operational interpretation}
\label{sec:motivation}

Our basic idea to characterize unique information is the following: If $Y$ has unique information about~$X$ with respect
to~$Z$, then there must be some way to extract this information.  That is, there must be a situation in which this
unique information is useful.  We formalize this idea in terms of decision problems as follows:

Let $X$, $Y$, $Z$ be three random variables, let $p$ be the marginal distribution of~$X$, and let
$\kappa\in[0,1]^{\Xcal\times\Ycal}$ and $\mu\in[0,1]^{\Xcal\times\Zcal}$ be (row) stochastic matrices describing the
conditional distribution of $Y$ and $Z$, respectively, given~$X$.  In other words, $p$, $\kappa$ and $\mu$ satisfy
\begin{equation*}
  P(X=x,Y=y) = p(x)\kappa(x;y)
  \quad\text{ and }\quad
  P(X=x,Z=z) = p(x)\mu(x;z).
\end{equation*}
Observe that, if $p(x)>0$, then $\kappa(x;y)$ and $\mu(x;z)$ are uniquely defined.  Otherwise, $\kappa(x;y)$ and
$\mu(x;z)$ can be chosen arbitrarily.  In this section, we will assume that the random variable $X$ has full support.
If this is not the case, our discussion will remain valid after replacing $\Xcal$ by the support of~$X$.  In fact, the
information quantities that we consider later will not depend on those matrix elements $\kappa(x;y)$ and $\mu(x;z)$
which are not uniquely defined.

Suppose that an agent has a finite set of possible actions~$\Acal$.  After the agent chooses her action~$a\in\Acal$, she
receives a reward $u(x,a)$, which not only depends on the chosen action $a\in\Acal$, but also on the value $x\in\Xcal$
of the random variable~$X$.  The tuple $(p,\Acal,u)$, consisting of the prior distribution~$p$, the set of possible
actions~$A$ and the reward function~$u$ is called a \emph{decision problem}.  If the agent can observe the value $x$ of
$X$ before choosing her action, her best strategy is to chose $a$ such that $u(x,a) = \max_{a'\in\Acal}u(x,a')$.  Suppose
now, that the agent cannot observe $X$ directly, but the agent knows the probability distribution $p$ of~$X$.  Moreover,
the agent observes a random variable $Y$ with conditional distribution described by the row-stochastic
matrix~$\kappa\in[0,1]^{\Xcal\times\Ycal}$.  In this context, $\kappa$ will also be called a \emph{channel} from $\Xcal$
to~$\Ycal$.
When using a channel~$\kappa$, the agent's optimal strategy is to choose her action such that her expected reward
\begin{equation}
  \label{eq:exp-reward}
  \sum_{x}P(X=x|Y=y)u(x,a) = \frac{\sum_{x}p(x)\kappa(x;y) u(x,a)}{\sum_{x\in\Xcal}p(x)\kappa(x;y)}
\end{equation}
is maximal.  Note that, in order to maximize~\eqref{eq:exp-reward}, the agent has to know (or estimate) the prior
distribution of~$X$ as well as the channel~$\kappa$.  Often, the agent is allowed to play a stochastic strategy.
However, in the present setting, the agent cannot increase her expected reward by randomizing her actions, and
therefore, we only consider deterministic strategies here.

Let $R(\kappa,p,u,y)$ be the maximum of~\eqref{eq:exp-reward} (over $a\in\Acal$), and let
\begin{equation*}
  R(\kappa,p,u) = \sum_{y}P(Y=y) R(\kappa,p,u,y).
\end{equation*}
be the maximal expected reward that the agent can achieve by always choosing the optimal action.

In this setting we make the following definition:
\begin{defn}
  \label{def:unq-info}
  Let $X,Y,Z$ be three random variables, and let $p$ be the marginal distribution of~$X$.
  \begin{itemize}
  \item%
    $Y$ has \emph{unique information} about $X$ (with respect to~$Z$), if there is a set $\Acal$ and a reward
    function $u\in\Rb^{\Xcal\times\Acal}$ such that $R(\kappa,p,u)> R(\mu,p,u)$.
  \item%
    $Z$ has \emph{no unique information} about $X$ (with respect to~$Y$), if for any set $\Acal$ and reward function
    $u\in\Rb^{\Xcal\times\Acal}$ the inequality $R(\kappa,p,u)\ge R(\mu,p,u)$ holds.  In this situation we also say that
    $Y$ knows everything that $Z$ knows about~$X$, and we write $Y\uge_{X} Z$.
  \end{itemize}
\end{defn}
This operational idea allows to distinguish when the unique information vanishes, but, unfortunately, does not allow to
quantify the unique information.

As shown recently in~\cite{BR13:Blackwell_relation_and_zonotopes}, the question whether~$Y\uge_{X}Z$ or not, does not
depend on the prior distribution~$p$ (but just on the support of~$p$, which we assume to be~$\Xcal$).
In fact, if $p$ has full support, then, in order to check whether~$Y\uge_{X}Z$, it suffices to know the stochastic
matrices~$\kappa,\mu$ representing the conditional distributions of~$Y$ and $Z$ given~$X$.

Consider the case $\Ycal=\Zcal$ and~$\kappa=\mu\in K(\Xcal;\Ycal)$, i.e. $Y$ and $Z$ use a similar channel. In this case, $Y$ has no unique information with
respect to~$Z$, and $Z$ has no unique information with respect to~$Y$.  Hence, in the
decomposition~\eqref{eq:MI-decomposition} only the shared information and the synergistic information may be larger than
zero.  The shared information may be computed from
\begin{equation*}
  SI(X:Y;Z) = MI(X:Y) - UI(X:Y\setminus Z) = MI(X:Y) = MI(X:Z);
\end{equation*}
and so the synergistic information is
\begin{equation*}
  CI(X:Y;Z) = MI(X:(Y,Z)) - SI(X:Y;Z) = MI(X:(Y,Z)) - MI(X:Y).
\end{equation*}
Observe that in this case, the shared information can be computed from the marginal distribution of $X$ and~$Y$.  Only
the synergistic information depends on the joint distribution of $X$, $Y$ and~$Z$.

We argue that this should be the case in general: By what was said above, whether the unique information
$UI(X:Y\setminus Z)$ is greater than zero only depends on the two channels~$\kappa$ and~$\mu$.  Even more is true: The
set of decision problems $(p,\Acal,u)$ such that $R(\kappa,p,u)> R(\mu,p,u)$ only depends on $\kappa$ and~$\mu$ (and the
support of~$p$).  To quantify the unique information, this set of decision problems must be measured in some way.  It is
reasonable to expect that this quantification can be achieved by taking into account only the marginal distribution $p$
of~$X$.  
Therefore, we believe that a sensible measure $UI$ for unique information should satisfy the following property:
\begin{equation}
  \tag{$\ast$}
  \label{eq:assumption}
  \text{$UI(X:Y\setminus Z)$ only depends on $p$, $\kappa$ and~$\mu$.}
\end{equation}
Although this condition seems to have not been considered before, many candidate measures of unique information satisfy
this property; for example those defined
in~\cite{WilliamsBeer:Nonneg_Decomposition_of_Multiinformation,HarderSalgePolani13:Bivariate_redundant_information}.
In the following, we explore the consequences of assumption~\eqref{eq:assumption}.
\begin{lemma}
  \label{lem:SI-depends}
  Under assumption~\eqref{eq:assumption}, the shared information only depends on $p$, $\kappa$ and~$\mu$.
\end{lemma}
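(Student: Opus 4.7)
The plan is to read off $SI$ as the difference of two quantities, each of which manifestly depends only on $p$, $\kappa$ and $\mu$. Concretely, I would invoke the consistency equation~\eqref{eq:MI-decomposition-2}, which gives
\[
SI(X:Y;Z) = MI(X:Y) - UI(X:Y\setminus Z).
\]
This rewriting is the whole idea: it separates $SI$ into a term that obviously depends only on the $(X,Y)$-marginal and a term that, by hypothesis, depends only on the desired data.

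First I would note that $MI(X:Y)$ is a function of the joint distribution of $(X,Y)$ alone, and that this joint distribution is in turn determined by $p$ and $\kappa$ via $P(X=x,Y=y)=p(x)\kappa(x;y)$. Hence $MI(X:Y)$ depends only on $p$ and $\kappa$, and a fortiori only on $p$, $\kappa$ and $\mu$. Next, by assumption~\eqref{eq:assumption} applied to $UI(X:Y\setminus Z)$, the second summand depends only on $p$, $\kappa$ and $\mu$. Subtracting gives the claim.

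There is no real obstacle here; the only thing to be a bit careful about is the convention that $\kappa(x;y)$ (and $\mu(x;z)$) are only uniquely defined on the support of $p$, but this is exactly the caveat already built into the paper's setup, so the argument goes through without modification. As a sanity check, the symmetric identity $SI(X:Y;Z) = MI(X:Z) - UI(X:Z\setminus Y)$ combined with the analogous instance of~\eqref{eq:assumption} for $UI(X:Z\setminus Y)$ yields the same conclusion, and the consistency equation~\eqref{eq:union-info} ensures the two expressions for $SI$ agree.
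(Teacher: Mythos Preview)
Your proof is correct and follows exactly the paper's approach: the paper's proof is the single line ``This follows from $SI(X:Y;Z) = MI(X:Y) - UI(X:Y\setminus Z)$,'' which is precisely your argument with the details left implicit. Your additional remarks about the support convention and the symmetric sanity check are fine but not needed.
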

\begin{proof}
  This follows from $SI(X:Y;Z) = MI(X:Y) - UI(X:Y\setminus Z)$.
\end{proof}

Let $\Delta$ be the set of all joint distributions of $X$, $Y$ and~$Z$.  Fix $P\in\Delta$, and assume that the marginal
distribution of $X$, denoted by $p$, has full support.  Denote by $\kappa$ and $\mu$ the stochastic matrices
corresponding to the conditional distributions of $Y$ and $Z$ given~$X$.  Let
\begin{multline*}
  \Delta_{P} = \Big\{ Q\in\Delta_{P} : Q(X=x,Y=y)=P(X=x,Y=y)\\
  \text{ and }Q(X=x,Z=z)=P(X=x,Z=z)\text{ for all }x\in\Xcal,y\in\Ycal,z\in\Zcal \Big\}
\end{multline*}
be the set of all joint distributions which have the same marginal distributions on the pairs $(X,Y)$ and $(X,Z)$, and let
\begin{equation*}
  \Dpinn=\big\{ Q\in\Delta_{P} : Q(x)>0\text{ for all }x\in\Xcal \big\}
\end{equation*}
be the subset of distributions with full support.  Lemma~\ref{lem:SI-depends} says that, under
assumption~\eqref{eq:assumption}, the functions $UI(X:Y\setminus Z)$, $UI(X:Z\setminus Y)$ and $SI(X:Y;Z)$ are constant
on~$\Dpinn$, and only the function $CI(X:Y;Z)$ depends on the joint distribution~$Q\in\Dpinn$.  If we further assume
continuity, the same statement holds true for all~$Q\in\Delta_{P}$.  To make clear that we now consider the synergistic
information and the mutual information as a function of the joint distribution~$Q\in\Delta$, we write $CI_{Q}(X:Y;Z)$
and $MI_{Q}(X:(Y,Z))$ in the following; and we omit this subscript, if these information theoretic quantities are
computed with respect to the ``true'' joint distribution~$P$.

Consider the following functions:
\begin{align*}
  \TUI(X:Y\setminus Z)  & = \min_{Q\in\Delta_{P}} MI_{Q}(X:Y|Z), \\
  \TUI(X:Z\setminus Y)  & = \min_{Q\in\Delta_{P}} MI_{Q}(X:Z|Y), \\
  \TSI(X:Y;Z) &= \max_{Q\in\Delta_{P}} CoI_{Q}(X;Y;Z), \\
  \TCI(X:Y;Z) &= MI(X:(Y,Z)) - \min_{Q\in\Delta_{P}}MI_{Q}(X:(Y,Z)).
\end{align*}
Observe that these minima and maxima are well-defined, since the set $\Delta_{P}$ is compact and the mutual informations
and the co-information are continuous.  The next lemma says that, under assumption~\eqref{eq:assumption}, the quantities
$\TUI$, $\TSI$ and $\TCI$ bound the unique, shared and synergistic information.
\begin{lemma}
  \label{lem:bounds}
  Let $UI(X:Y\setminus Z)$, $UI(X:Z\setminus Y)$, $SI(X:Y;Z)$ and $CI(X:Y;Z)$ be non-negative continuous functions on
  $\Delta$ satisfying equations~\eqref{eq:MI-decomposition} and~\eqref{eq:MI-decomposition-2} and
  assumption~\eqref{eq:assumption}.  Then
  \begin{align*}
    UI(X:Y\setminus Z)  & \le \TUI(X:Y\setminus Z),  \\
    UI(X:Z\setminus Y)  & \le \TUI(X:Z\setminus Y),  \\
    SI(X:Y;Z)           & \ge \TSI(X:Y;Z),           \\
    CI(X:Y;Z)           & \ge \TCI(X:Y;Z).
  \end{align*}
  If $P\in\Delta$ and if there exists $Q\in\Delta_{P}$ such that $CI_{Q}(X:Y;Z)=0$, then equality holds in all four
  inequalities.  Conversely, if equality holds in one of the inequalities for a joint distribution~$P\in\Delta$, then
  there exists $Q\in\Delta_{P}$ such that $CI_{Q}(X:Y;Z)=0$.
\end{lemma}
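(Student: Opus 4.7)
The plan is to reduce all four inequalities to a single observation: $\min_{Q\in\Delta_{P}} CI_{Q}(X:Y;Z) \ge 0$, which is immediate from the assumed non-negativity of $CI$. The key preliminary step is to show that, under~\eqref{eq:assumption}, everything on $\Delta_{P}$ is controlled by the single summand $CI_{Q}$: by Lemma~\ref{lem:SI-depends} and \eqref{eq:assumption}, the three quantities $UI_{Q}(X:Y\setminus Z)$, $UI_{Q}(X:Z\setminus Y)$, and $SI_{Q}(X:Y;Z)$ are constant in $Q\in\Dpinn$, and since $p$ has full support one actually has $\Delta_{P}=\Dpinn$ (so continuity is not even needed, but is available as a fallback). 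Thus in \eqref{eq:MI-decomposition} only $CI_{Q}$ depends on $Q\in\Delta_{P}$, and everything else is fixed at its value at~$P$.

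With that in hand, each inequality is a one-line manipulation of an identity that follows from \eqref{eq:MI-decomposition}, \eqref{eq:MI-decomposition-2} and the chain rule. First, $MI_{Q}(X:Y|Z)=UI_{Q}(X:Y\setminus Z)+CI_{Q}(X:Y;Z)$; minimizing over $Q\in\Delta_{P}$ and using that $UI_{Q}$ is constant and $CI_{Q}\ge 0$ gives
\[
  \TUI(X:Y\setminus Z)\;=\;UI(X:Y\setminus Z)\;+\;\min_{Q\in\Delta_{P}}CI_{Q}(X:Y;Z)\;\ge\;UI(X:Y\setminus Z),
\]
and the inequality for $UI(X:Z\setminus Y)$ follows by swapping $Y$ and $Z$. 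Second, \eqref{eq:CoI} gives $CoI_{Q}=SI_{Q}-CI_{Q}$, so maximizing yields $\TSI(X:Y;Z)=SI(X:Y;Z)-\min_{Q}CI_{Q}\le SI(X:Y;Z)$. Third, since $MI_{Q}(X:(Y,Z))$ equals $SI_{Q}+UI_{Q}(X:Y\setminus Z)+UI_{Q}(X:Z\setminus Y)+CI_{Q}$ and the first three summands are constant in $Q\in\Delta_{P}$, minimizing gives $\TCI(X:Y;Z)=CI(X:Y;Z)-\min_{Q}CI_{Q}\le CI(X:Y;Z)$.

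The equality characterization then falls out for free: in each of the four lines above the gap between the two sides is exactly $\min_{Q\in\Delta_{P}}CI_{Q}(X:Y;Z)$. Hence equality in any single inequality is equivalent to this minimum being zero; by continuity of $CI$ and compactness of $\Delta_{P}$ the minimum is attained at some $Q^{*}\in\Delta_{P}$, which (together with non-negativity of $CI$) means $CI_{Q^{*}}(X:Y;Z)=0$. Conversely, existence of such a $Q^{*}$ forces $\min_{Q}CI_{Q}=0$ and therefore equality in all four inequalities simultaneously. I do not expect any real obstacle here: the entire argument is bookkeeping once one recognizes that \eqref{eq:assumption} isolates all of the $Q$-dependence on $\Delta_{P}$ into the single synergistic term.
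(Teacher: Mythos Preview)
Your argument is correct and follows essentially the same route as the paper: both proofs hinge on the observation that, by~\eqref{eq:assumption} and Lemma~\ref{lem:SI-depends}, the quantities $UI$ and $SI$ are constant on~$\Delta_{P}$ so that only $CI_{Q}$ varies, after which the inequalities reduce to $\min_{Q\in\Delta_{P}}CI_{Q}\ge 0$. Your organization is in fact a bit cleaner than the paper's, since you identify $\min_{Q}CI_{Q}$ as the common gap in all four inequalities at the outset, which makes the equality characterization immediate rather than requiring a separate pass.
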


\begin{proof}
  Fix a joint distribution $P\in\Delta$.  By Lemma~\ref{lem:SI-depends}, assumption~\eqref{eq:assumption} and
  continuity, the functions $UI(X:Y\setminus Z)$, $UI(X:Z\setminus Y)$ and $SI(X:Y;Z)$ are constant on~$\Delta_{P}$, and
  only the function $CI(X:Y;Z)$ depends on the joint distribution~$Q\in\Delta_{P}$.  The
  decomposition~\eqref{eq:MI-decomposition} rewrites to
  \begin{equation*}
    CI_{Q}(X:Y;Z) = MI_{Q}(X:(Y,Z)) - UI(X:Y\setminus Z) - UI(X:Z\setminus Y) - SI(X:Y;Z).
  \end{equation*}
  Using the non-negativity of synergistic information, this implies
  \begin{equation*}
    UI(X:Y\setminus Z) + UI(X:Z\setminus Y) + SI(X:Y;Z) \le \min_{Q\in\Delta_{P}} MI_{Q}(X:(Y,Z)).
  \end{equation*}
  In total, this shows
  \begin{equation*}
    CI(X:Y;Z) \ge MI(X:(Y,Z)) - \min_{Q\in\Delta_{P}}MI_{Q}(X:(Y,Z)) = \TCI(X:Y;Z).
  \end{equation*}

  The chain rule of mutual information says
  \begin{equation*}
    MI_{Q}(X:(Y,Z)) = MI_{Q}(X:Z) + MI_{Q}(X:Y|Z).
  \end{equation*}
  Now, $Q\in\Delta_{P}$ implies $MI_{Q}(X:Z)=MI(X:Z)$, and therefore,
  \begin{equation*}
    \TCI(X:Y;Z) = MI(X:Y|Z) - \min_{Q\in\Delta_{P}}MI_{Q}(X:Y|Z).
  \end{equation*}
  Moreover,
  \begin{equation*}
    MI_{Q}(X:Y|Z) = H_{Q}(X|Z) - H_{Q}(X|Y,Z),
  \end{equation*}
  where $H_{Q}(X|Z) = H(X|Z)$ for $Q\in\Delta_{P}$, and so
  \begin{equation*}
    \TCI(X:Y;Z) = \max_{Q\in\Delta_{P}}H_{Q}(X|Y,Z) -H(X|Y,Z).
  \end{equation*}
  
  By~\eqref{eq:CoI}, the shared information satisfies
  \begin{align*}
    SI(X:Y;Z) & = CI(X:Y;Z) + MI(X:Y) + MI(X:Z) - MI(X:(Y,Z)) \\
    &\ge \TCI(X:Y;Z) + MI(X:Y) + MI(X:Z) - MI(X:(Y,Z))
    \\ & = MI(X:Y) + MI(X:Z) - \min_{Q\in\Delta_{P}} MI_{Q}(X:(Y,Z))
    \\ & = \max_{Q\in\Delta_{P}} (MI_{Q}(X:Y) + MI_{Q}(X:Z) - MI_{Q}(X:(Y,Z)))
    \\ & = \max_{Q\in\Delta_{P}} CoI_{Q}(X;Y;Z) = \TSI(X:Y;Z).
  \end{align*}
  By~\eqref{eq:MI-decomposition-2}, the unique information satisfies
  \begin{align*}
    UI(X:Y\setminus Z)
    & = MI(X:Y) - SI(X:Y;Z) \\
    & \le \min_{Q\in\Delta_{P}}(MI_{Q}(X:(Y,Z) - MI(X:Z)) \\
    & = \min_{Q\in\Delta_{P}}(MI_{Q}(X:Y|Z)) = \TUI(X:Y\setminus Z).
  \end{align*}
  The inequality for $UI(X:Z\setminus Y)$ follows similarly.

  If there exists $Q_{0}\in\Delta_{P}$ such that $CI_{Q_{0}}(X:Y;Z)=0$, then
  \begin{equation*}
    0 = CI_{Q_{0}}(X:Y;Z) \ge \TCI(X:Y;Z) = MI(X:(Y,Z)) - \min_{Q\in\Delta_{P}}MI_{Q}(X:(Y,Z)) \ge 0.
  \end{equation*}
  Hence, in this case, all inequalities are tight.  Conversely, assume that one of the inequalities is tight for
  some~$P\in\Delta$.  The proof above shows that all four inequalities hold with equality.  By
  assumption~\eqref{eq:assumption}, the functions $\TUI$ and $\TSI$ are constant on~$\Delta_{P}$.  Therefore, the
  inequalities are tight for all~$Q\in\Delta_{P}$.  Now, if $Q_{0}\in\Delta_{P}$ minimizes $MI_{Q}(X:(Y,Z))$ over
  $\Delta_{P}$, then $CI_{Q_{0}}(X:Y;Z) = \TCI_{Q_{0}}(X:Y;Z) = 0$.
\end{proof}

The proof of Lemma~\ref{lem:bounds} shows that the optimization problems defining $\TUI$, $\TSI$ and $\TCI$ are in fact
equivalent; that is, it suffices to solve one of them.  Lemma~\ref{lem:opt-prob} in Section~\ref{sec:properties} gives
yet another formulation and shows that the solution is actually unique.

In the following, we interpret the functions $\TUI$, $\TSI$ and
$\TCI$ as measures of unique, shared and complementary information.
Under assumption~\eqref{eq:assumption}, Lemma~\ref{lem:bounds} says
that choosing those measures is equivalent to saying that in each set
$\Delta_{P}$ there exists a measure $Q$ such that $CI_{Q}(X:Y;Z)=0$.
In other words, $\TUI$, $\TSI$ and $\TCI$ are the only measures of
unique, shared and complementary information that satisfy the following property:
\begin{multline}
  \tag{$\ast\ast$}
  \label{eq:assumption-two}
  \text{It is not possible
    to decide whether or not there is synergistic information,} \\
  \text{when only the marginal distributions of $(X,Y)$ and $(X,Z)$ are known.}
\end{multline}
For any other combination of measures different from $\TUI$, $\TSI$ and $\TCI$ that satisfy
assumption~\eqref{eq:assumption} there are combinations of $(p,\mu,\kappa)$ such that the existence of non-vanishing
complementary information can be deduced.
Since complementary information should capture precisely the information that is carried by the joint dependencies
between $X$, $Y$ and~$Z$ we find assumption~\eqref{eq:assumption-two} natural, and we consider this observation as
evidence in favour of our interpretation of the functions $\TUI$, $\TSI$ and~$\TCI$.

\section{Properties}
\label{sec:properties}

\subsection{Characterization and Positivity}
\label{sec:positivity}

The next lemma shows that the optimization problems involved in the definitions of $\TUI$, $\TSI$ and $\TCI$ are easy to
solve numerically, in the sense that they are convex optimization problems on convex sets.  As always, theory is easier
than practice, as discussed in Example~\ref{ex:illconditioned} in Appendix~\ref{sec:crit-eq}.
\begin{lemma}
  \label{lem:opt-prob}
  Let $P\in\Delta$ and $Q_{P}\in\Delta_{P}$.  The following conditions are equivalent:
  \begin{enumerate}
  \item $MI_{Q_{P}}(X:Y|Z) = \min_{Q\in\Delta_{P}}MI_{Q}(X:Y|Z)$.
  \item $MI_{Q_{P}}(X:Z|Y) = \min_{Q\in\Delta_{P}}MI_{Q}(X:Z|Y)$.
  \item $MI_{Q_{P}}(X;(Y,Z)) = \min_{Q\in\Delta_{P}}MI_{Q}(X:(Y,Z))$.
  \item $CoI_{Q_{P}}(X;Y;Z) = \max_{Q\in\Delta_{P}}CoI_{Q}(X;Y;Z)$.
  \item $H_{Q_{P}}(X|Y,Z) = \max_{Q\in\Delta_{P}}H_{Q}(X|Y,Z)$.
  \end{enumerate}
  Moreover, the functions $MI_{Q}(X:Y|Z)$, $MI_{Q}(X:Z|Y)$ and $MI_{Q}(X:(Y,Z))$ are convex on $\Delta_{P}$; and
  $CoI_{Q}(X;Y;Z)$ and $H_{Q}(X|Y,Z)$ are concave.  Therefore, for fixed $P\in\Delta$, the set of all
  $Q_{P}\in\Delta_{P}$ satisfying any of these conditions is convex.
\end{lemma}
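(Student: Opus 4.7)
The plan is to collapse all five conditions (i)–(v) to a single optimization problem — maximizing $H_Q(X|Y,Z)$ over $\Delta_P$ — and then derive the convexity and concavity statements as an immediate consequence, using only the concavity of conditional entropy.

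The key observation is that on $\Delta_P$ the $(X,Y)$- and $(X,Z)$-marginals of $Q$ are prescribed, so the quantities $H_Q(X)$, $H_Q(X|Y)$, $H_Q(X|Z)$, $MI_Q(X:Y)$ and $MI_Q(X:Z)$ are constant in $Q\in\Delta_P$. Applying the standard identities
\begin{align*}
  MI_Q(X:(Y,Z)) &= H_Q(X)  - H_Q(X|Y,Z), \\
  MI_Q(X:Y|Z)   &= H_Q(X|Z) - H_Q(X|Y,Z), \\
  MI_Q(X:Z|Y)   &= H_Q(X|Y) - H_Q(X|Y,Z), \\
  CoI_Q(X;Y;Z)  &= MI_Q(X:Y) + MI_Q(X:Z) - MI_Q(X:(Y,Z)),
\end{align*}
rewrites each of the five expressions appearing in (i)–(v) as an affine function of the single quantity $Q \mapsto H_Q(X|Y,Z)$: the three mutual-information terms appear as $(\textnormal{const}) - H_Q(X|Y,Z)$, while $CoI_Q(X;Y;Z)$ and $H_Q(X|Y,Z)$ itself appear (up to a constant) as $+H_Q(X|Y,Z)$. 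Consequently, minimizing any of the first three is equivalent to maximizing either of the last two, and a single $Q_P\in\Delta_P$ realizes all five extrema simultaneously.

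For the convexity/concavity part, I would invoke the textbook concavity of conditional entropy: $Q \mapsto H_Q(X|Y,Z)$ is concave on the full simplex $\Delta$, hence on the affine slice $\Delta_P$. The affine identities above then immediately yield convexity of $MI_Q(X:(Y,Z))$, $MI_Q(X:Y|Z)$ and $MI_Q(X:Z|Y)$ on $\Delta_P$, and concavity of $CoI_Q(X;Y;Z)$ and $H_Q(X|Y,Z)$. Convexity of the common set of optimizers is the standard fact that the argmax of a concave function (equivalently the argmin of a convex function) on a convex set is convex.

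Since every piece reduces to the single concavity statement for conditional entropy, I do not expect a real obstacle; the only conceptual step is recognizing that the marginal constraints defining $\Delta_P$ force the five objectives to be affine rescalings of one and the same function, after which both halves of the lemma follow from a standard concavity fact and elementary convex-analysis bookkeeping.
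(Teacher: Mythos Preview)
Your proposal is correct and follows essentially the same approach as the paper: both arguments reduce everything to the single observation that on $\Delta_P$ the five objectives differ from $H_Q(X|Y,Z)$ by constants (because the $(X,Y)$- and $(X,Z)$-marginals are fixed), and then invoke concavity of conditional entropy. The paper's proof is slightly more terse, citing the earlier Lemma~\ref{lem:bounds} for some of the constant-difference relations rather than writing them all out, but your self-contained version is equivalent in substance.
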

\begin{proof}
  The conditional entropy $H_{Q}(X|Y,Z)$ is a concave function on~$\Delta$; therefore, the set of maxima is convex.  To
  show the equivalence of the five optimization problems and the convexity properties, it suffices to show that the
  difference of any two minimized functions and the sum of a minimized and a maximized function is constant
  on~$\Delta_{p}$.  Except for $H_{Q}(X|Y,Z)$ this follows from the proof of Lemma~\ref{lem:bounds}.  For
  $H_{Q}(X|Y,Z)$, this follows from the chain rule:
  \begin{align*}
    MI_{Q}(X:(Y,Z)) &= MI_{P}(X:Y) + MI_{Q}(X:Z|Y)\\
    &= MI_{P}(X:Y) + H_{P}(X|Y) - H_{Q}(X|Y,Z) = H(X) - H_{Q}(X|Y,Z). \qedhere
  \end{align*}
\end{proof}
The optimization problems mentioned in Lemma~\ref{lem:opt-prob} will be studied more closely in
Appendix~\ref{sec:computing}.

\begin{lemma}[Non-negativity]
  \label{lem:nonneg}
  $\TUI$, $\TSI$ and $\TCI$ are non-negative functions.
\end{lemma}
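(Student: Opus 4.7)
The plan is to treat each of the three functions separately, since two of the three non-negativity statements are essentially immediate and only $\TSI$ requires any construction.

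For $\TUI$ I would note that conditional mutual information $MI_Q(X:Y\mid Z)$ is non-negative for every joint distribution $Q$, so its minimum over $\Delta_{P}$ is non-negative; the same argument handles $\TUI(X:Z\setminus Y)$. Note that the min is attained because $\Delta_P$ is compact and $MI_Q$ is continuous in $Q$. For $\TCI$ I would simply observe that $P$ itself lies in $\Delta_{P}$ (it has its own pair marginals by construction), so
\[
\min_{Q \in \Delta_{P}} MI_{Q}(X:(Y,Z)) \;\le\; MI_{P}(X:(Y,Z)) \;=\; MI(X:(Y,Z)),
\]
which immediately yields $\TCI(X:Y;Z) \ge 0$.

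The substantive case is $\TSI \ge 0$, where I would produce an explicit witness $Q^{*}\in\Delta_{P}$ with non-negative co-information. The natural candidate is the ``Markov'' distribution with the prescribed $(X,Y)$ and $(X,Z)$ marginals under which $Y$ and $Z$ are conditionally independent given $X$, namely
\[
Q^{*}(x,y,z) \;=\; \frac{P(X=x,Y=y)\,P(X=x,Z=z)}{p(x)} \quad\text{whenever } p(x)>0,
\]
and $Q^{*}(x,y,z)=0$ otherwise. A one-line marginalization check confirms $Q^{*}\in\Delta_{P}$, and by construction $MI_{Q^{*}}(Y:Z\mid X)=0$. Applying the standard identity $CoI(X;Y;Z)=MI(Y:Z)-MI(Y:Z\mid X)$ to $Q^{*}$ then gives $CoI_{Q^{*}}(X;Y;Z)=MI_{Q^{*}}(Y:Z)\ge 0$, and hence $\TSI(X:Y;Z)=\max_{Q\in\Delta_{P}}CoI_{Q}(X;Y;Z)\ge 0$.

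The only real obstacle is identifying the right witness for $\TSI$; the conditionally independent distribution $Q^{*}$ is the canonical ``synergy-free'' element of $\Delta_{P}$, and everything else reduces to verifying marginals and applying standard information-theoretic identities. No new ideas beyond those already used in the proof of Lemma~\ref{lem:bounds} and basic properties of mutual and conditional mutual information are needed.
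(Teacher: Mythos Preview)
Your proof is correct and essentially identical to the paper's: the paper handles $\TUI$ and $\TCI$ in one line each (non-negativity of conditional mutual information and ``by definition,'' respectively) and for $\TSI$ constructs exactly your witness $Q^{*}$ (called $Q_{0}$ there), using the same identity $CoI_{Q_{0}}(X;Y;Z)=MI_{Q_{0}}(Y:Z)-MI_{Q_{0}}(Y:Z\mid X)=MI_{Q_{0}}(Y:Z)\ge 0$.
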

\begin{proof}
  $\TCI$ is non-negative by definition.  The functions $\TUI$ are non-negative, because they are obtained by minimizing
  mutual informations, which are non-negative.

  Consider the real function
  \begin{equation*}
    Q_{0}(X=x,Y=y,Z=z) =
    \begin{cases}
      \frac{P(X=x,Y=y)P(X=x,Z=z)}{P(X=x)}, & \text{ if }P(X=x)>0, \\
      0, & \text{ else.}
    \end{cases}
  \end{equation*}
  It is easy to check $Q_{0}\in\Delta_{P}$.  Moreover, with respect to~$Q_{0}$, the two random variables $Y$ and $Z$ are
  conditionally independent given~$X$, that is, $MI_{Q_{0}}(Y:Z|X)=0$.  But this implies
  \begin{equation*}
    CoI_{Q_{0}}(X;Y;Z) = MI_{Q_{0}}(Y:Z) - MI_{Q_{0}}(Y:Z|X) = MI_{Q_{0}}(Y:Z) \ge 0.
  \end{equation*}
  Therefore, $\TSI(X:Y;Z)= \max_{Q\in\Delta_{P}} CoI_{Q}(X;Y;Z) \ge CoI_{Q_{0}}(X;Y;Z) \ge 0$, showing that $\TSI$ is a
  non-negative function.
\end{proof}
In general, the measure $Q_{0}$ constructed in the proof of Lemma~\ref{lem:nonneg} does not satisfy the conditions of
Lemma~\ref{lem:opt-prob}.

\subsection{Vanishing shared and unique information}
\label{sec:vanishing}

In this section we study when $\TSI=0$ and when $\TUI=0$.  In particular, in Corollary~\ref{cor:Blackwell} we show that
$\TUI$ conforms with the operational idea put forward in Section~\ref{sec:motivation}.

\begin{lemma}
  \label{lem:no-unique}
  $\TUI(X:Y\setminus Z)$ vanishes if and only if there exists a row-stochastic matrix
  $\lambda\in[0,1]^{\Zcal\times\Ycal}$ such that
  \begin{equation*}
    P(X=x,Y=y) = \sum_{z\in\Zcal}P(X=x,Z=z)\lambda(z;y).
  \end{equation*}
\end{lemma}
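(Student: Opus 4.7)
The plan is to reduce $\TUI(X:Y\setminus Z)=0$ to the existence of a joint distribution $Q\in\Delta_{P}$ under which $X$ and $Y$ are conditionally independent given $Z$, and then to convert that conditional independence into the existence of the stochastic matrix~$\lambda$. Since $MI_{Q}(X:Y|Z)\ge 0$ with equality iff $X\Perp Y\mid Z$ under~$Q$, and since $\TUI(X:Y\setminus Z)=\min_{Q\in\Delta_{P}}MI_{Q}(X:Y|Z)$ with the minimum attained on a compact set by continuity, the statement $\TUI(X:Y\setminus Z)=0$ is equivalent to the existence of some $Q\in\Delta_{P}$ for which $X\Perp Y\mid Z$ holds.

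For the ``only if'' direction, suppose such a $Q$ exists. Define
\begin{equation*}
  \lambda(z;y) = Q(Y=y\mid Z=z)
\end{equation*}
for $z$ with $Q(Z=z)>0$, and pick any row-stochastic completion for the other values of~$z$ (e.g.\ a fixed distribution on~$\Ycal$); this gives a row-stochastic $\lambda\in[0,1]^{\Zcal\times\Ycal}$. Using conditional independence under $Q$ together with $Q\in\Delta_{P}$ (which ensures that the $(X,Z)$-marginal of $Q$ equals that of $P$), I get
\begin{equation*}
  Q(X=x,Y=y,Z=z)=Q(X=x,Z=z)\,\lambda(z;y)=P(X=x,Z=z)\,\lambda(z;y)
\end{equation*}
whenever $Q(Z=z)>0$, and both sides vanish otherwise. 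Marginalising over~$z$ and using that the $(X,Y)$-marginal of~$Q$ coincides with that of $P$ yields the claimed identity.

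For the ``if'' direction, given such a $\lambda$, define
\begin{equation*}
  Q(X=x,Y=y,Z=z):=P(X=x,Z=z)\,\lambda(z;y).
\end{equation*}
This is non-negative and normalised, since summing over~$y$ uses $\sum_{y}\lambda(z;y)=1$. The $(X,Z)$-marginal of $Q$ is $P(X=x,Z=z)$ (again by $\sum_{y}\lambda(z;y)=1$), and the $(X,Y)$-marginal is $P(X=x,Y=y)$ by the assumed identity. Hence $Q\in\Delta_{P}$. Moreover, the conditional $Q(Y=y\mid X=x,Z=z)=\lambda(z;y)$ does not depend on~$x$, so $X\Perp Y\mid Z$ under $Q$, giving $MI_{Q}(X:Y|Z)=0$ and therefore $\TUI(X:Y\setminus Z)=0$.

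The computation is essentially a translation between two formulations of conditional independence with prescribed marginals, so there is no real obstacle; the only minor care needed is the handling of $z$ with $Q(Z=z)=0$ when reading off $\lambda$ in the forward direction, which is resolved by extending $\lambda$ arbitrarily on those rows, as the identity to be proven is trivially satisfied on them.
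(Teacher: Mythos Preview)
Your proof is correct and follows essentially the same approach as the paper's own proof: both directions are handled by translating between the vanishing of $MI_{Q}(X:Y|Z)$ for some $Q\in\Delta_{P}$ and the factorisation $Q(x,y,z)=P(x,z)\lambda(z;y)$. Your version is slightly more explicit in justifying the existence of the minimiser and in handling the rows of $\lambda$ corresponding to $z$ with $Q(Z=z)=0$, but the argument is the same.
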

\begin{proof}
  If $MI_{Q}(X:Y|Z) = 0$ for some~$Q\in\Delta_{P}$, then $X$ and $Y$ are independent given $Z$ with respect
  to~$Q$.  Therefore, there exists a stochastic matrix $\lambda\in[0,1]^{\Zcal\times\Ycal}$ satisfying
  \begin{multline*}
    P(X=x,Y=y) = Q(X=x,Y=y) = \sum_{z\in\Zcal}Q(X=x,Z=z)\lambda(z;y) \\
    = \sum_{z\in\Zcal}P(X=x,Z=z)\lambda(z;y).
  \end{multline*}
  Conversely, if such a matrix $\lambda$ exists, then the equality
  \begin{equation*}
    Q(X=x,Y=y,Z=z) = P(X=x,Z=z)\lambda(z;y)
  \end{equation*}
  defines a probability distribution $Q$ which lies in~$\Delta_{P}$.  Then
  \begin{equation*}
    \TUI(X:Y\setminus Z)\le MI_{Q}(X:Y|Z) = 0.\qedhere
  \end{equation*}
\end{proof}
The last result can be translated into the language of our motivational Section~\ref{sec:motivation} and says that
$\TUI$ is consistent with our operational idea of unique information:
\begin{cor}
  \label{cor:Blackwell}
  $\TUI(X:Z\setminus Y)=0$ if and only if $Z$ has no unique information about $X$ with respect to~$Y$ (according to
  Definition~\ref{def:unq-info}).
\end{cor}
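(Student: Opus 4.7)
The plan is to reduce the corollary to Lemma~\ref{lem:no-unique} combined with Blackwell's classical theorem on the comparison of statistical experiments, which is already invoked in Section~\ref{sec:motivation} via~\cite{BR13:Blackwell_relation_and_zonotopes}. First I would apply Lemma~\ref{lem:no-unique} with the roles of $Y$ and $Z$ interchanged, translating the condition $\TUI(X:Z\setminus Y)=0$ into the existence of a row-stochastic matrix $\lambda\in[0,1]^{\Ycal\times\Zcal}$ such that
\begin{equation*}
  P(X=x,Z=z) = \sum_{y\in\Ycal}P(X=x,Y=y)\lambda(y;z)
\end{equation*}
for all $x,z$. Since $p$ has full support on~$\Xcal$ (by the standing assumption of Section~\ref{sec:motivation}), I can divide by $p(x)$ to rewrite this equivalently as the matrix identity $\mu(x;z) = \sum_{y}\kappa(x;y)\lambda(y;z)$. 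In other words, $\mu$ is a garbling (post-processing) of~$\kappa$.

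Next I would match this reformulation with the operational definition. By Definition~\ref{def:unq-info}, $Z$ has no unique information about $X$ with respect to $Y$ iff $R(\kappa,p,u)\ge R(\mu,p,u)$ for every decision problem $(p,\Acal,u)$. Blackwell's theorem asserts exactly that this dominance relation is equivalent to the existence of a row-stochastic~$\lambda$ with $\mu=\kappa\lambda$. Combining this equivalence with the first step completes the proof.

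For completeness I would sketch the two directions of Blackwell's equivalence as they specialise here. The easy direction (garbling implies reward dominance) uses that any strategy exploiting the channel $\mu$ can be simulated on top of $\kappa$ by drawing $y$ according to $\kappa(x;\cdot)$ and then $z$ according to $\lambda(y;\cdot)$; this produces a randomised strategy using $\kappa$ with the same expected reward as the original strategy using~$\mu$, and the deterministic optimum $R(\kappa,p,u)$ is at least as good as any randomised strategy. The harder direction (dominance implies garbling) is the substantive content of Blackwell's theorem and is proved by a separating-hyperplane argument in the zonotope of output distributions; the version needed, valid for finite state spaces and depending only on the support of~$p$, is precisely the one established in~\cite{BR13:Blackwell_relation_and_zonotopes}.

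The main obstacle is invoking the correct finitary form of Blackwell's theorem, but since that is already part of the literature cited in Section~\ref{sec:motivation}, this step is a direct reference rather than a new argument. No additional calculation is required beyond dividing by $p(x)$ to pass between joint-distribution and channel formulations.
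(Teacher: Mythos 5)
Your proposal is correct and follows essentially the same route as the paper: reduce $\TUI(X:Z\setminus Y)=0$ to the existence of a stochastic $\lambda$ with $\mu=\kappa\lambda$ via Lemma~\ref{lem:no-unique}, then identify this garbling condition with the reward-dominance condition of Definition~\ref{def:unq-info} by Blackwell's theorem. The only difference is that you spell out the translation between the joint-distribution and channel formulations and sketch the two directions of Blackwell's equivalence, which the paper leaves implicit and handles by citation.
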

\begin{proof}
  We need to show that decision problems can be solved with the channel $\kappa$ at least as well as with the channel
  $\mu$ if and only if $\mu=\kappa\lambda$ for some stochastic matrix~$\lambda$.  This result is known as Blackwell's
  theorem~\cite{Blackwell53:Equivalent_comparisons_of_experiments}; see
  also~\cite{BR13:Blackwell_relation_and_zonotopes}.
\end{proof}
\begin{cor}
  \label{cor:ident-dist}
  Suppose that $\Ycal=\Zcal$ and that the marginal distributions of the pairs $(X,Y)$ and $(X,Z)$ are identical.  Then
  \begin{align*}
    \TUI(X:Y\setminus Z) & = \TUI(X:Z\setminus Y)=0, \\
    \TSI(X:Y;Z) & = MI(X:Y) = MI(X:Z), \\
    \TCI(X:Y;Z) & = MI(X:Y|Z) = MI(X:Z|Y).
  \end{align*}
  In particular, under assumption~\eqref{eq:assumption}, there is no unique information in this situation.
\end{cor}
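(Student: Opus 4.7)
The plan is to derive everything from Lemma~\ref{lem:no-unique} together with algebraic identities already collected inside the proof of Lemma~\ref{lem:bounds}.

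First I would produce the matrix required by Lemma~\ref{lem:no-unique}. Because $\Ycal = \Zcal$, the Kronecker delta $\lambda(z;y) = \delta_{z,y}$ is a row-stochastic matrix in $[0,1]^{\Zcal\times\Ycal}$, and the hypothesis that the marginals of $(X,Y)$ and $(X,Z)$ agree gives
\[
  \sum_{z\in\Zcal} P(X=x, Z=z)\lambda(z;y) \;=\; P(X=x, Z=y) \;=\; P(X=x, Y=y).
\]
Lemma~\ref{lem:no-unique} therefore yields $\TUI(X:Y\setminus Z) = 0$, and the symmetric argument using $\delta_{y,z}$ gives $\TUI(X:Z\setminus Y) = 0$.

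Next I would read off $\TCI$ and $\TSI$ from identities already derived inside the proof of Lemma~\ref{lem:bounds}. That proof establishes $\TCI(X:Y;Z) = MI(X:Y|Z) - \TUI(X:Y\setminus Z)$ and its $Y\leftrightarrow Z$ counterpart; substituting zero for the $\TUI$ terms yields $\TCI(X:Y;Z) = MI(X:Y|Z) = MI(X:Z|Y)$. For $\TSI$, the chain rule together with $\TUI(X:Z\setminus Y)=0$ gives $\min_{Q\in\Delta_{P}} MI_{Q}(X:(Y,Z)) = MI(X:Y) + \TUI(X:Z\setminus Y) = MI(X:Y)$, so
\[
  \TSI(X:Y;Z) \;=\; MI(X:Y) + MI(X:Z) - \min_{Q\in\Delta_{P}} MI_{Q}(X:(Y,Z)) \;=\; MI(X:Z),
\]
which equals $MI(X:Y)$ because the two pair-marginals coincide. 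The final sentence is then immediate from Lemma~\ref{lem:bounds}: any non-negative continuous $UI$ satisfying \eqref{eq:MI-decomposition}, \eqref{eq:MI-decomposition-2} and \eqref{eq:assumption} is sandwiched as $0 \le UI(X:Y\setminus Z) \le \TUI(X:Y\setminus Z) = 0$, and similarly for the other direction.

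There is no real obstacle here; the entire argument is a substitution once the first step is in place. The one content-rich point is recognising that the Kronecker delta is the matrix demanded by Lemma~\ref{lem:no-unique}, which crucially uses the equality $\Ycal = \Zcal$ of state spaces rather than just a bijection between them.
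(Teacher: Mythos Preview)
Your proof is correct and follows exactly the paper's approach: the paper's one-line proof is ``Apply Lemma~\ref{lem:no-unique} with the identity matrix in the place of~$\lambda$,'' which is precisely your Kronecker-delta step, and the remaining equalities are left implicit there but are exactly the substitutions you spell out using the identities from the proof of Lemma~\ref{lem:bounds}.
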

\begin{proof}
  Apply Lemma~\ref{lem:no-unique} with the identity matrix in the place of~$\lambda$.
\end{proof}

\begin{lemma}
  \label{lem:TSI-zero}
  $\TSI(X:Y;Z)=0$ if and only if $MI_{Q_{0}}(Y:Z)=0$, where $Q_{0}\in\Delta$ is the distribution constructed in the
  proof of Lemma~\ref{lem:nonneg}.
\end{lemma}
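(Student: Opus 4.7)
The plan is to prove the two implications separately, both times leveraging the identity
\[
CoI_Q(X;Y;Z) = MI_Q(Y:Z) - MI_Q(Y:Z|X)
\]
together with the fact that, by construction, $Q_0\in\Delta_{P}$ and $Y$ is conditionally independent of $Z$ given $X$ under $Q_0$, so $MI_{Q_0}(Y:Z|X)=0$ and hence $CoI_{Q_0}(X;Y;Z)=MI_{Q_0}(Y:Z)$.

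For the ``only if'' direction, I would simply chain inequalities: since $Q_0\in\Delta_{P}$,
\[
\TSI(X:Y;Z) \;=\; \max_{Q\in\Delta_{P}} CoI_{Q}(X;Y;Z) \;\ge\; CoI_{Q_{0}}(X;Y;Z) \;=\; MI_{Q_{0}}(Y:Z) \;\ge\; 0.
\]
If $\TSI(X:Y;Z)=0$, the whole chain collapses and in particular $MI_{Q_0}(Y:Z)=0$.

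For the ``if'' direction I would show that $CoI_Q(X;Y;Z)\le 0$ for \emph{every} $Q\in\Delta_{P}$; combined with Lemma~\ref{lem:nonneg} this forces $\TSI(X:Y;Z)=0$. The crucial observation is that when $MI_{Q_0}(Y:Z)=0$ one has $Q_0(Y,Z)=P(Y)P(Z)$, so the mixture $\sum_x P(X=x)\,\pi_x$ of the conditional product distributions $\pi_x(y,z):=P(Y=y|X=x)P(Z=z|X=x)$ coincides with $P(Y)P(Z)$. For any $Q\in\Delta_{P}$ the conditionals $q_x(y,z):=Q(Y=y,Z=z|X=x)$ are couplings of $P(Y|X=x)$ and $P(Z|X=x)$ with $\sum_x P(x)\,q_x=Q(Y,Z)$, and joint convexity of the KL divergence yields
\[
MI_Q(Y:Z) \;=\; D\Big(\sum_x P(x)\,q_x \,\Big\|\, \sum_x P(x)\,\pi_x\Big) \;\le\; \sum_x P(x)\,D(q_x\|\pi_x) \;=\; MI_Q(Y:Z|X),
\]
i.e.\ $CoI_Q(X;Y;Z)\le 0$, as required.

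The step that needs care is recognizing that the hypothesis $MI_{Q_0}(Y:Z)=0$ is exactly what identifies the reference measure $P(Y)P(Z)$ in the divergence defining $MI_Q(Y:Z)$ with the $X$-mixture of the conditional products $\pi_x$; once that alignment is spotted, the convexity of KL divergence makes the comparison with $MI_Q(Y:Z|X)$ immediate and the rest is bookkeeping.
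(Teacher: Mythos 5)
Your proof is correct, and the ``if'' direction takes a genuinely different route from the paper's. The ``only if'' direction is identical: both arguments chain $\TSI(X:Y;Z)\ge CoI_{Q_0}(X;Y;Z)=MI_{Q_0}(Y:Z)\ge 0$ and let the chain collapse. For the converse, the paper verifies the first-order optimality conditions: it checks, via the directional derivatives computed in Lemma~\ref{lem:crit-eq} along the vectors $\gamma_{x;y,y';z,z'}$ spanning $\ker A$, that all critical equations hold at $Q_0$ (the conditional-independence factorization kills the first factor, the hypothesis $MI_{Q_0}(Y:Z)=0$ kills the second), so that $Q_0$ is a critical point of a convex problem and hence a global optimizer, giving $\TSI=CoI_{Q_0}=0$. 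You instead prove the stronger global statement that $CoI_Q(X;Y;Z)\le 0$ for \emph{every} $Q\in\Delta_P$, by writing $MI_Q(Y:Z)=D\bigl(\sum_x P(x)q_x \,\big\|\, \sum_x P(x)\pi_x\bigr)$ --- which is legitimate precisely because the hypothesis identifies $\sum_x P(x)\pi_x$ with $P(Y)P(Z)=Q(Y)Q(Z)$ --- and invoking joint convexity of relative entropy; the absolute-continuity issues are harmless since $q_x\ll\pi_x$ automatically. Your route is more self-contained: it bypasses the appendix machinery (the parametrization of $\Delta_P$ and Lemma~\ref{lem:crit-eq}) and needs only Lemma~\ref{lem:nonneg} plus a standard convexity inequality. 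What the paper's route buys in exchange is the explicit additional conclusion that $Q_0$ solves all the equivalent optimization problems of Lemma~\ref{lem:opt-prob} (your argument also yields this, since $CoI_{Q_0}=0=\max_Q CoI_Q$, but the paper's phrasing makes it immediate), and it reuses a computation that the paper needs anyway for Lemma~\ref{lem:product-system}.
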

The proof of the lemma will be given in Appendix~\ref{sec:proofs}, since it relies on some
technical results from Appendix~\ref{sec:computing}, where $\Delta_{P}$ is characterized and the critical equations
corresponding to the optimization problems in Lemma~\ref{lem:opt-prob} are computed.
\begin{cor}
  If both $\ind{Y}{Z}[X]$ and~$\ind{Y}{Z}$, then $\TSI(X:Y;Z)=0$.
\end{cor}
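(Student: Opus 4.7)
The plan is to reduce the corollary directly to Lemma~\ref{lem:TSI-zero}, which says that $\TSI(X:Y;Z)=0$ if and only if $MI_{Q_{0}}(Y:Z)=0$, where
\[
Q_{0}(X=x,Y=y,Z=z) = \frac{P(X=x,Y=y)P(X=x,Z=z)}{P(X=x)}
\]
is the distribution constructed in the proof of Lemma~\ref{lem:nonneg} (assuming full support of~$X$, otherwise restrict to the support). Under this reduction the conclusion $\TSI(X:Y;Z)=0$ becomes the single condition $MI_{Q_{0}}(Y:Z)=0$, and both of our assumptions will enter in a very direct way.

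First, I would use the hypothesis $\ind{Y}{Z}[X]$ to identify $P$ with $Q_{0}$. Indeed, conditional independence of $Y$ and $Z$ given $X$ is exactly the statement
\[
P(X=x,Y=y,Z=z) = \frac{P(X=x,Y=y)P(X=x,Z=z)}{P(X=x)}
\]
for every $x$ in the support of~$X$, which coincides with the defining formula of~$Q_{0}$. Hence $P = Q_{0}$, and in particular $MI_{Q_{0}}(Y:Z) = MI_{P}(Y:Z)$.

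Second, I would invoke the hypothesis $\ind{Y}{Z}$, which is precisely $MI_{P}(Y:Z)=0$. Combined with the previous step, this gives $MI_{Q_{0}}(Y:Z)=0$, so Lemma~\ref{lem:TSI-zero} yields $\TSI(X:Y;Z)=0$.

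There is no real obstacle here: once Lemma~\ref{lem:TSI-zero} is available, the corollary is essentially a one-line consequence of the observation that $Y\Perp Z\mid X$ forces $P=Q_{0}$. The only minor point to watch is the case where the marginal of $X$ is not of full support, but this is handled as in Section~\ref{sec:motivation} by restricting $\Xcal$ to the support of~$X$, so that $Q_{0}$ is well defined and the identification $P=Q_{0}$ remains valid.
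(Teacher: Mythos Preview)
Your proof is correct and matches the paper's own argument essentially verbatim: the paper also observes that $\ind{Y}{Z}[X]$ forces $P=Q_{0}$ and then invokes Lemma~\ref{lem:TSI-zero} together with $\ind{Y}{Z}$.
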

\begin{proof}
  By assumption, $P=Q_{0}$.  Thus the statement follows from Lemma~\ref{lem:TSI-zero}.
\end{proof}

\subsection{The bivariate PI axioms}
\label{sec:PI-axioms}

In~\cite{WilliamsBeer:Nonneg_Decomposition_of_Multiinformation}, Williams and Beer proposed axioms that a measure of
shared information should satisfy.  We call these axioms the \emph{PI axioms} after the partial information
decomposition framework derived from these axioms in~\cite{WilliamsBeer:Nonneg_Decomposition_of_Multiinformation}.
In fact, the PI axioms apply to a measure of shared information that is defined for arbitrarily many random variables,
while our function $\TSI$ only measures the shared information of two random variables (about a third variable).
The PI axioms are as follows:
\begin{enumerate}
\item The shared information of $Y_{1},\dots,Y_{n}$ about $X$ is symmetric under permutations of $Y_{1},\dots,Y_{n}$.
  \hfill(symmetry)
\item The shared information of $Y_{1}$ about $X$ is equal to $MI(X:Y_{1})$.
  \hfill(self-redundancy)
\item The shared information of $Y_{1},\dots,Y_{n}$ about $X$ is less than the shared information of
  $Y_{1},\dots,Y_{n-1}$ about~$X$, with equality if $Y_{n-1}$ is a function of~$Y_{n}$.
  \hfill(monotonicity)
\end{enumerate}
Any measure $\TSI$ of bivariate shared information that is consistent with the PI axioms must obviously satisfy the
following two properties, which we call the \emph{bivariate PI axioms:}
\begin{enumerate}
\item[A)] $\TSI(X:Y;Z)=\TSI(X:Z;Y)$.
  \hfill(symmetry)
\item[B)] $\TSI(X:Y;Z)\le MI(X:Y)$, with equality if $Z$ is a function of~$Y$.

  \hfill(bivariate monotonicity)
\end{enumerate}
We do not claim that any function $\TSI$ that satisfies A) and B) can be extended to a measure of multivariate shared
information satisfying the PI axioms.  In fact, such a claim is false, and as discussed in
Section~\ref{sec:several-variables}, our bivariate function $\TSI$ is not extendable in this way.

The following two lemmas show that $\TSI$ satisfies the bivariate PI axioms, and they show cor\-re\-spond\-ing properties of
$\TUI$ and~$\TCI$.
\begin{lemma}[Symmetry]
  \label{lem:symmetry-consistency}
  \begin{align*}
    \TSI(X:Y;Z) &= \TSI(X:Z;Y), \\
    \TCI(X:Y;Z) &= \TCI(X:Z;Y), \\
    MI(X:Z) + \TUI(X:Y\setminus Z) & = MI(X:Y) + \TUI(X:Z\setminus Y).
  \end{align*}
\end{lemma}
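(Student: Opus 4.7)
The plan is to exploit a single structural observation: the feasible set $\Delta_{P}$ in the definitions of $\TUI$, $\TSI$ and $\TCI$ is symmetric under the swap $Y\leftrightarrow Z$, because the constraints fixing the $(X,Y)$- and $(X,Z)$-marginals of $P$ form a pair of conditions that is invariant under this swap. All three equalities will then follow by combining this with elementary symmetry properties of the information-theoretic quantities being optimized.

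For the first identity, I would observe that the co-information is symmetric under every permutation of its three arguments; this is visible from the form $CoI_{Q}(X;Y;Z)=MI_{Q}(X:Y)+MI_{Q}(X:Z)-MI_{Q}(X:(Y,Z))$ together with the fact that $(Y,Z)$ and $(Z,Y)$ are the same random variable. Maximizing this symmetric function over the symmetric set $\Delta_{P}$ therefore yields the same value whether we call the last pair $(Y,Z)$ or $(Z,Y)$. The second identity is even more direct: $MI_{Q}(X:(Y,Z))=MI_{Q}(X:(Z,Y))$ pointwise on $\Delta_{P}$, so the minima agree, and $MI(X:(Y,Z))$ depends only on the joint distribution of $X$ with the pair, not on the ordering.

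For the third identity I would not invoke symmetry of $\Delta_{P}$ at all, but rather apply the chain rule of mutual information in its two equivalent forms. For any $Q\in\Delta_{P}$,
\begin{equation*}
  MI_{Q}(X:(Y,Z)) = MI_{Q}(X:Z) + MI_{Q}(X:Y|Z) = MI_{Q}(X:Y) + MI_{Q}(X:Z|Y).
\end{equation*}
Since $Q\in\Delta_{P}$ fixes the pairwise marginals, $MI_{Q}(X:Y)=MI(X:Y)$ and $MI_{Q}(X:Z)=MI(X:Z)$, so the two right-hand sides yield the identity
\begin{equation*}
  MI(X:Z)+MI_{Q}(X:Y|Z) = MI(X:Y)+MI_{Q}(X:Z|Y)
\end{equation*}
for every $Q\in\Delta_{P}$. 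Minimizing over $Q$ on both sides (using that $MI(X:Y)$ and $MI(X:Z)$ are constants) gives $MI(X:Z)+\TUI(X:Y\setminus Z)=MI(X:Y)+\TUI(X:Z\setminus Y)$; note that the minimizers on the two sides automatically coincide by Lemma~\ref{lem:opt-prob}, but this is not even needed because the two expressions being minimized differ by the same constant on all of $\Delta_{P}$.

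There is no real obstacle here; the only subtlety worth stating explicitly is the pointwise equality on $\Delta_{P}$ in the last display, which is what allows the minima to be compared term-by-term rather than only after taking infima on each side separately.
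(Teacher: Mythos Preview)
Your proof is correct and follows essentially the same approach as the paper. The paper dispatches the first two equalities in a single sentence (``the definitions of $\TSI$ and $\TCI$ are symmetric in $Y$ and $Z$''), whereas you spell out what that symmetry means; for the third equality both you and the paper pull the constant $MI(X:Z)$ (resp.\ $MI(X:Y)$) inside the minimum, apply the chain rule to rewrite $MI(X:Z)+MI_{Q}(X:Y|Z)=MI(X:Y)+MI_{Q}(X:Z|Y)$ pointwise on $\Delta_{P}$, and then pull the constants back out.
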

\begin{proof}
  The first two equalities follow since the definitions of $\TSI$ and $\TCI$ are symmetric in~$Y$ and~$Z$.  The third
  equality follows from
  \begin{multline*}
    MI(X:Z) + \TUI(X:Y\setminus Z)
    = \min_{Q\in\Delta_{P}}(MI(X:Z) + MI_{Q}(X:Y|Z)) \\
    = \min_{Q\in\Delta_{P}}(MI(X:Y) + MI_{Q}(X:Z|Y))  = MI(X:Y) + \TUI(X:Z\setminus Y),
  \end{multline*}
  where the chain rule of mutual information was used.
\end{proof}
The third equality from Lemma~\ref{lem:symmetry-consistency} is the consistency condition~\eqref{eq:union-info}.

The following lemma is the inequality condition of the monotonicity axiom.
\begin{lemma}[Bounds]
  \label{lem:monotonicity}
  \begin{align*}
    \TSI(X:Y;Z) &\le MI(X:Y), \\
    \TCI(X:Y;Z) &\le MI(X:Y|Z), \\
    \TUI(X:Y\setminus Z) &\ge MI(X:Y) - MI(X:Z).
  \end{align*}  
\end{lemma}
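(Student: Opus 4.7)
The plan is to derive all three inequalities from a single observation: for any $Q\in\Delta_{P}$, the chain rule of mutual information gives
\[
  MI_{Q}(X:(Y,Z)) = MI_{Q}(X:Z) + MI_{Q}(X:Y|Z) \ge MI_{Q}(X:Z) = MI(X:Z),
\]
and symmetrically $MI_{Q}(X:(Y,Z)) \ge MI(X:Y)$, since $Q\in\Delta_{P}$ fixes both pair marginals $(X,Y)$ and~$(X,Z)$. Taking the minimum over $Q\in\Delta_{P}$ yields the two bounds
\[
  \min_{Q\in\Delta_{P}} MI_{Q}(X:(Y,Z)) \ge \max\{MI(X:Y),\, MI(X:Z)\}.
\]

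First I would prove the bound on $\TSI$. Using the rewriting from the proof of Lemma~\ref{lem:bounds}, namely
\[
  \TSI(X:Y;Z) = MI(X:Y) + MI(X:Z) - \min_{Q\in\Delta_{P}} MI_{Q}(X:(Y,Z)),
\]
the inequality $\min_{Q} MI_{Q}(X:(Y,Z)) \ge MI(X:Z)$ gives $\TSI(X:Y;Z) \le MI(X:Y)$.

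Next I would handle $\TCI$. By definition $\TCI(X:Y;Z) = MI(X:(Y,Z)) - \min_{Q\in\Delta_{P}} MI_{Q}(X:(Y,Z))$, and the chain rule applied to~$P$ itself gives $MI(X:(Y,Z)) = MI(X:Z) + MI(X:Y|Z)$. Combining with $\min_{Q} MI_{Q}(X:(Y,Z)) \ge MI(X:Z)$ yields $\TCI(X:Y;Z) \le MI(X:Y|Z)$.

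Finally, for the lower bound on $\TUI$, the identity (also established in the proof of Lemma~\ref{lem:bounds})
\[
  \TUI(X:Y\setminus Z) = \min_{Q\in\Delta_{P}} MI_{Q}(X:Y|Z) = \min_{Q\in\Delta_{P}} MI_{Q}(X:(Y,Z)) - MI(X:Z)
\]
combined with $\min_{Q} MI_{Q}(X:(Y,Z)) \ge MI(X:Y)$ gives $\TUI(X:Y\setminus Z) \ge MI(X:Y) - MI(X:Z)$. There is no real obstacle here: everything reduces to the chain rule and the invariance of the two pair marginals on~$\Delta_{P}$, both of which have already been exploited in the proof of Lemma~\ref{lem:bounds}.
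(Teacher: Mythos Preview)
Your proof is correct and follows essentially the same approach as the paper: both rely on the chain rule, the nonnegativity of conditional mutual information, and the invariance of the pair marginals on~$\Delta_{P}$. The only cosmetic difference is that you organize all three inequalities around the single bound $\min_{Q\in\Delta_{P}} MI_{Q}(X:(Y,Z)) \ge \max\{MI(X:Y),MI(X:Z)\}$, whereas the paper handles each inequality via its own rewriting (e.g., $\TSI = \max_{Q}(MI(X:Y)-MI_{Q}(X:Y|Z))$ and $\TCI = MI(X:Y|Z) - \min_{Q}MI_{Q}(X:Y|Z)$) and derives the third inequality from the first via~\eqref{eq:MI-decomposition-2} and symmetry.
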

\begin{proof}
  The first inequality follows from
  \begin{multline*}
    \TSI(X:Y;Z) = \max_{Q\in\Delta_{P}} CoI_{Q}(X;Y;Z) \\
     = \max_{Q\in\Delta_{P}} (MI(X:Y) - MI_{Q}(X:Y|Z))
    \le MI(X:Y),
  \end{multline*}
  the second from
  \begin{equation*}
    \TCI(X:Y;Z) = MI(X:Y|Z) - \min_{Q\in\Delta_{P}}MI_{Q}(X:Y|Z),
  \end{equation*}
  using the chain rule again.  The last inequality follows from the first inequality,
  equality~\eqref{eq:MI-decomposition-2} and the symmetry of Lemma~\ref{lem:symmetry-consistency}.
\end{proof}

To finish the study of the bivariate PI axioms, only the equality condition in the monotonicity axiom is missing.  We
show that $\TSI$ satisfies $\TSI(X:Y;Z)=MI(X:Y)$ not only if $Z$ is a deterministic function of~$Y$, but also more
generally, when $Z$ is independent of $X$ given~$Y$.  In this case, $Z$ can be interpreted as a stochastic function
of~$Y$, independent of~$X$.
\begin{lemma}
  \label{lem:X-indep-Z-given-Y}
  If $X$ is independent of $Z$ given~$Y$, then $P$ solves the optimization problems of Lemma~\ref{lem:opt-prob}.  In
  particular,
  \begin{align*}
    \TUI(X:Y\setminus Z)  & = MI(X:Y|Z), \\
    \TUI(X:Z\setminus Y)  & = 0, \\
    \TSI(X:Y;Z) &= MI(X;Z), \\
    \TCI(X:Y;Z) &= 0.
  \end{align*}
\end{lemma}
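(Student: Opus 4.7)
The plan is to exhibit $P$ itself as a solution to one of the equivalent optimization problems in Lemma~\ref{lem:opt-prob}, and then read off all four identities as direct evaluations at $Q = P$.

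First I would observe that the hypothesis $\ind{X}{Z}[Y]$ is exactly the statement $MI_{P}(X:Z|Y) = 0$. Since conditional mutual information is non-negative on all of $\Delta$, and since $P \in \Delta_{P}$ trivially, this means $P$ attains the minimum value $0$ of $MI_{Q}(X:Z|Y)$ over~$\Delta_{P}$; that is, $P$ satisfies condition (ii) of Lemma~\ref{lem:opt-prob}. By the equivalence established there, $P$ is simultaneously optimal for all five optimization problems.

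Next, with $P$ identified as the optimizer, each of the four quantities reduces to an evaluation at~$P$. From condition (ii), $\TUI(X:Z\setminus Y) = MI_{P}(X:Z|Y) = 0$. From condition (i), $\TUI(X:Y\setminus Z) = MI_{P}(X:Y|Z) = MI(X:Y|Z)$. From condition (iii), $\min_{Q\in\Delta_{P}} MI_{Q}(X:(Y,Z)) = MI_{P}(X:(Y,Z)) = MI(X:(Y,Z))$, so by the definition of $\TCI$ we get $\TCI(X:Y;Z) = 0$. Finally, from condition (iv) together with the chain rule,
\begin{equation*}
  \TSI(X:Y;Z) = CoI_{P}(X;Y;Z) = MI(X:Y) + MI(X:Z) - MI(X:(Y,Z)),
\end{equation*}
and the chain rule gives $MI(X:(Y,Z)) = MI(X:Y) + MI_{P}(X:Z|Y) = MI(X:Y)$ under the hypothesis, so $\TSI(X:Y;Z) = MI(X:Z)$.

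There is essentially no obstacle here: the entire argument turns on recognizing that $\ind{X}{Z}[Y]$ forces $MI_{P}(X:Z|Y) = 0$, which is the global lower bound for the non-negative function being minimized in formulation (ii). The only care needed is to invoke Lemma~\ref{lem:opt-prob} to transfer optimality from problem (ii) to problems (i), (iii), (iv), so that all four closed-form evaluations are justified; this was already reduced in the proof of Lemma~\ref{lem:opt-prob} to the fact that pairwise differences of the five objectives are constant on~$\Delta_{P}$.
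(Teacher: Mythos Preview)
Your proposal is correct and follows exactly the same approach as the paper: observe that the hypothesis gives $MI_{P}(X:Z|Y)=0$, which is the global lower bound for condition (ii) of Lemma~\ref{lem:opt-prob}, so $P$ is optimal, and then read off the four values. The paper's proof is terser (it stops after noting that $P$ minimizes), whereas you spell out the evaluation of each of the four quantities, but there is no substantive difference.
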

\begin{proof}
  If $X$ is independent of $Z$ given~$Y$, then
  \begin{equation*}
    MI(X:Z|Y) = 0 \le \min_{Q\in\Delta_{P}} MI_{Q}(X:Z|Y),
  \end{equation*}
  so $P$ minimizes $MI_{Q}(X:Z|Y)$ over~$\Delta_{P}$.
\end{proof}

\begin{rem}
  \label{rem:X-indep-Z-given-Y}
  In fact, Lemma~\ref{lem:X-indep-Z-given-Y} can be generalized as follows: In any binary information decomposition,
  equations~\eqref{eq:MI-decomposition} and~\eqref{eq:MI-decomposition-2} and the chain rule imply
  \begin{equation*}
    MI(X:Z|Y) = MI(X:(Y,Z)) - MI(X:Y)
    = UI(X:Z\setminus Y) + CI(X:Y;Z).
  \end{equation*}
  Therefore, if $MI(X:Z|Y)=0$, then $UI(X:Z\setminus Y)=0=CI(X:Y;Z)$.
\end{rem}

\subsection{Probability distributions with structure}
\label{sec:specstruct}

In this section we compute the values of $\TSI$, $\TCI$ and $\TUI$ for probability distributions with special structure.
If two of the variables are identical, then~$\TCI=0$ as a consequence of~Lemma~\ref{lem:X-indep-Z-given-Y} (see
Corollaries~\ref{cor:Y-is-Z} and~\ref{cor:X-is-Y}).  When $X=(Y,Z)$, then the same is true
(Proposition~\ref{prop:X-is-YZ}).  Moreover, in this case, $\TSI((Y,Z):Y;Z)=MI(Y:Z)$.  This equation has been postulated
as an additional axiom, called \emph{identity axiom}, in~\cite{HarderSalgePolani13:Bivariate_redundant_information}.

\begin{cor}
  \label{cor:Y-is-Z}
  \begin{align*}
    \TCI(X:Y;Y) & = 0, \\
    \TSI(X:Y;Y) & = CoI(X;Y;Y) = MI(X:Y), \\
    \TUI(X:Y\setminus Y) & = 0.
  \end{align*}  
\end{cor}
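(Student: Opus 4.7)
The strategy is a one-step specialization of Lemma \ref{lem:X-indep-Z-given-Y} to the degenerate case $Z = Y$. When the third random variable literally coincides with $Y$, the conditional independence $X \Perp Z \mid Y$ holds trivially: $Z$ is then a deterministic function of $Y$, and hence is conditionally independent of anything given $Y$. Lemma \ref{lem:X-indep-Z-given-Y} therefore applies and yields, in the present notation, $\TCI(X:Y;Z) = 0$, $\TUI(X:Z \setminus Y) = 0$, $\TSI(X:Y;Z) = MI(X:Z)$, and $\TUI(X:Y \setminus Z) = MI(X:Y|Z)$.

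Substituting $Z = Y$ reproduces the three identities in the corollary, once we note that $MI(X:Y|Y) = 0$ (which is immediate, since once $Y$ is known no further information about $X$ can be extracted from $Y$ itself). The remaining equality $CoI(X;Y;Y) = MI(X:Y)$ follows by direct expansion of the definition of coinformation:
\[
CoI(X;Y;Y) = MI(X:Y) + MI(X:Y) - MI(X:(Y,Y)) = 2\,MI(X:Y) - MI(X:Y) = MI(X:Y).
\]

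There is no serious obstacle here; the corollary is essentially a one-line consequence of the earlier lemma. The only conceptual point worth flagging is that one should check Lemma \ref{lem:X-indep-Z-given-Y} remains meaningful when two of the three random variables literally coincide, but since both its hypothesis and its conclusion are stated solely in terms of the joint distribution of $(X,Y,Z)$, this degenerate case is automatically covered.
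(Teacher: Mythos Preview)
Your proof is correct and follows exactly the paper's approach: the paper's proof is the single sentence ``If $Y=Z$, then $X$ is independent of $Z$ given~$Y$,'' invoking Lemma~\ref{lem:X-indep-Z-given-Y}. Your version simply spells out the substitutions and the auxiliary computation $CoI(X;Y;Y)=MI(X:Y)$ more explicitly.
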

\begin{proof}
  If $Y=Z$, then $X$ is independent of $Z$ given~$Y$.
\end{proof}
\begin{cor}
  \label{cor:X-is-Y}
  \begin{align*}
    \TCI(X:X;Z) & = 0, \\
    \TSI(X:X;Z) & = CoI(X;X;Z) = MI(X:Z) - MI(X:Z|X) = MI(X:Z), \\
    \TUI(X:X\setminus Z) & = MI(X:X|Z) = H(X|Z), \\
    \TUI(X:Z\setminus X) & = MI(X:Z|X) = 0.
  \end{align*}  
\end{cor}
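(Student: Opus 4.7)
The plan is to simply apply Lemma~\ref{lem:X-indep-Z-given-Y} with the substitution $Y \mapsto X$. The hypothesis of that lemma is that $X$ is independent of $Z$ given the ``middle'' variable; in our setting the middle variable is $X$ itself, and conditioning on $X$ trivially makes $X$ independent of anything (indeed $X$ is a deterministic function of itself, so $H(X|X,Z)=0$ and hence $MI(X:Z|X)=0$). So the hypothesis of Lemma~\ref{lem:X-indep-Z-given-Y} is automatically satisfied.

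Substituting $Y = X$ into the four conclusions of Lemma~\ref{lem:X-indep-Z-given-Y} directly gives
\begin{align*}
  \TUI(X:X\setminus Z) & = MI(X:X|Z), \\
  \TUI(X:Z\setminus X) & = 0, \\
  \TSI(X:X;Z) & = MI(X:Z), \\
  \TCI(X:X;Z) & = 0.
\end{align*}
To match the statement of the corollary it only remains to identify $MI(X:X|Z) = H(X|Z)$ (since $H(X|X,Z)=0$), and to verify the co-information identity $CoI(X;X;Z) = MI(X:Z) - MI(X:Z|X) = MI(X:Z)$, using again that $MI(X:Z|X)=0$.

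There is no real obstacle here: every step is a direct specialization of Lemma~\ref{lem:X-indep-Z-given-Y} together with the observation that conditioning on $X$ kills any mutual information with~$X$. The only thing one has to be a little careful about is that Lemma~\ref{lem:X-indep-Z-given-Y} is stated for three distinct symbols $X,Y,Z$, but its proof only uses the conditional independence hypothesis, which remains valid when one of the variables is identified with another. Hence the corollary follows immediately.
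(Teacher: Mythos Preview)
Your proof is correct and is exactly the paper's argument: the paper's one-line proof is ``If $X=Y$, then $X$ is independent of $Z$ given~$Y$,'' which invokes Lemma~\ref{lem:X-indep-Z-given-Y} precisely as you do. Your additional remarks about $MI(X:X|Z)=H(X|Z)$ and $MI(X:Z|X)=0$ simply make explicit the identities that the paper leaves implicit in the displayed equalities.
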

\begin{proof}
  If $X=Y$, then $X$ is independent of $Z$ given~$Y$.
\end{proof}
\begin{rem}
  \label{rem:Y-is-X-or-Z}
  Remark~\ref{rem:X-indep-Z-given-Y} implies that Corollaries~\ref{cor:Y-is-Z} and~\ref{cor:X-is-Y} hold for any
  bivariate information de\-com\-po\-si\-tion.
\end{rem}
\begin{prop}[Identity property]
  \label{prop:X-is-YZ}
  Suppose that $\Xcal=\Ycal\times\Zcal$, and $X=(Y,Z)$.  Then $P$ solves the optimization problems of
  Lemma~\ref{lem:opt-prob}.  In particular,
  \begin{align*}
    \TCI((Y,Z):Y;Z) & = 0, \\
    \TSI((Y,Z):Y;Z) & = MI(Y:Z), \\
    \TUI((Y,Z):Y\setminus Z) & = H(Y|Z), \\
    \TUI((Y,Z):Z\setminus Y) & = H(Z|Y).
  \end{align*}  
\end{prop}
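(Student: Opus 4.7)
The plan is to apply Lemma~\ref{lem:opt-prob}(5), showing that $P$ itself maximizes $H_Q(X|Y,Z)$ over $Q\in\Delta_{P}$, and then read off the four values by direct computation under~$P$. The crux is a support argument: every $Q\in\Delta_{P}$ is forced to sit on the same ``diagonal'' as~$P$, so $H_{Q}(X|Y,Z)$ vanishes identically on~$\Delta_{P}$ and $P$ is trivially a maximizer.

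First I would exploit the identity $X=(Y,Z)$ under~$P$: for $x=(y_{0},z_{0})$ the pair marginal $P(X=x,Y=y)$ is zero unless $y=y_{0}$, and similarly $P(X=x,Z=z)$ is zero unless $z=z_{0}$. Since $Q\in\Delta_{P}$ matches both pair marginals, the same vanishing holds for~$Q$. The elementary inequality
\begin{equation*}
Q(X=(y_{0},z_{0}),Y=y,Z=z) \le \min\bigl\{Q(X=(y_{0},z_{0}),Y=y),\, Q(X=(y_{0},z_{0}),Z=z)\bigr\}
\end{equation*}
then forces $Q(X=(y_{0},z_{0}),Y=y,Z=z)=0$ whenever $(y,z)\neq(y_{0},z_{0})$. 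Hence $X=(Y,Z)$ almost surely under~$Q$, so $H_{Q}(X|Y,Z)=0$ for every $Q\in\Delta_{P}$. In particular $P$ attains the maximum, and Lemma~\ref{lem:opt-prob} guarantees that $P$ simultaneously solves all five optimization problems.

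With $P$ identified as the optimizer, the four quantities become ordinary information-theoretic functionals of~$P$. Since $X=(Y,Z)$, we have $MI(X:(Y,Z))=H(X)=H(Y,Z)$, $MI(X:Y)=H(Y)-H(Y|X)=H(Y)$, and similarly $MI(X:Z)=H(Z)$. This yields $\TCI((Y,Z):Y;Z)=MI(X:(Y,Z))-MI_{P}(X:(Y,Z))=0$ and
\begin{equation*}
\TSI((Y,Z):Y;Z) = CoI_{P}(X;Y;Z) = H(Y)+H(Z)-H(Y,Z) = MI(Y:Z).
\end{equation*}
For the unique informations I would use $MI_{P}(X:Y|Z)=H(Y|Z)-H(Y|X,Z)=H(Y|Z)$ (since $H(Y|X)=0$), and symmetrically $MI_{P}(X:Z|Y)=H(Z|Y)$.

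I do not expect a real obstacle here. The only point requiring care is handling possible zero marginals $P(X=x)=0$, but such atoms do not arise under our assumption $X=(Y,Z)$ except in ways that contribute nothing to any conditional entropy, and the support argument above is written so that it treats empty fibres correctly.
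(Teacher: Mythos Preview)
Your proof is correct and takes essentially the same route as the paper: both arguments work by showing that $\Delta_{P}$ degenerates when $X=(Y,Z)$, so that $P$ is trivially optimal, and then compute the four quantities directly. The only difference is packaging: the paper invokes Corollary~\ref{cor:X-is-YZ:DP-empty} (which in turn uses the Cartesian decomposition of $\Delta_{P}$ from Lemma~\ref{lem:polytope-DeltaP}) to conclude $\Delta_{P}=\{P\}$, whereas your support argument bypasses that machinery and shows directly that every $Q\in\Delta_{P}$ satisfies $X=(Y,Z)$ almost surely. In fact your argument already proves $\Delta_{P}=\{P\}$ outright (once the off-diagonal mass vanishes, $Q(X=(y_{0},z_{0}),Y=y_{0},Z=z_{0})=Q(X=(y_{0},z_{0}),Y=y_{0})=P(X=(y_{0},z_{0}),Y=y_{0})$), so you could have stated that instead of passing through $H_{Q}(X|Y,Z)=0$; either way the conclusion follows.
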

\begin{proof}
  If $X=(Y,Z)$, then, by Corollary~\ref{cor:X-is-YZ:DP-empty} in the Appendix, $\Delta_{P}=\{P\}$, and therefore
  \begin{multline*}
    \TSI((Y,Z):Y;Z) =MI((Y,Z):Y) - MI((Y,Z):Y|Z) \\
    = H(Y) - H(Y|Z) = MI(Y:Z)
  \end{multline*}
  and
  \begin{equation*}
    \TUI((Y:Z):Y\setminus Z) = MI((Y,Z):Y|Z) = H(Y|Z),
  \end{equation*}
  and similarly for~$\TUI((Y:Z):Y\setminus Z)$.
\end{proof}

The following Lemma shows that the quantities $\TUI$, $\TSI$ and $\TCI$ are additive when considering systems that can
be decomposed into independent subsystems.
\begin{lemma}
  \label{lem:product-system}
  Let $X_{1},X_{2},Y_{1},Y_{2},Z_{1},Z_{2}$ be random variables such that $(X_{1},Y_{1},Z_{1})$ is independent of
  $(X_{2},Y_{2},Z_{2})$.  Then
  \begin{align*}
    \TSI((X_{1},X_{2}):(Y_{1},Y_{2});(Z_{1},Z_{2})) &= \TSI(X_{1}:Y_{1};Z_{1}) + \TSI(X_{1}:Y_{1};Z_{1}), \\
    \TCI((X_{1},X_{2}):(Y_{1},Y_{2});(Z_{1},Z_{2})) &= \TCI(X_{1}:Y_{1};Z_{1}) + \TCI(X_{1}:Y_{1};Z_{1}), \\
    \TUI((X_{1},X_{2}):(Y_{1},Y_{2})\setminus(Z_{1},Z_{2})) &= \TUI(X_{1}:Y_{1}\setminus Z_{1}) + \TUI(X_{1}:Y_{1}\setminus Z_{1}), \\
    \TUI((X_{1},X_{2}):(Z_{1},Z_{2})\setminus(Y_{1},Y_{2})) &= \TUI(X_{1}:Z_{1}\setminus Y_{1}) + \TUI(X_{1}:Z_{1}\setminus Y_{1}).
  \end{align*}
\end{lemma}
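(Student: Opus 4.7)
The plan is to prove the additivity of the optimization problem first, and then deduce additivity of all four quantities from the definitions and Lemma~\ref{lem:opt-prob}. Throughout, write $P = P_1 \otimes P_2$ where $P_i$ is the joint distribution of $(X_i, Y_i, Z_i)$, and write $X = (X_1, X_2)$, $Y = (Y_1, Y_2)$, $Z = (Z_1, Z_2)$.

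First I would observe that $\Delta_P$ admits product distributions: if $Q_i \in \Delta_{P_i}$, then $Q_1 \otimes Q_2 \in \Delta_P$, since the $(X,Y)$-marginal factors as $Q_1(X_1,Y_1) \cdot Q_2(X_2,Y_2) = P_1(X_1,Y_1) \cdot P_2(X_2,Y_2) = P(X,Y)$, and similarly for $(X,Z)$. Conversely, for any $Q \in \Delta_P$, marginalizing over $(X_{3-i}, Y_{3-i}, Z_{3-i})$ yields a distribution $\bar Q_i \in \Delta_{P_i}$, because the marginal constraints propagate: $\bar Q_i(X_i, Y_i) = P_i(X_i, Y_i)$ and $\bar Q_i(X_i, Z_i) = P_i(X_i, Z_i)$.

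Next I would invoke Lemma~\ref{lem:opt-prob}(v) to reduce everything to a conditional entropy maximization and show
\begin{equation*}
  \max_{Q\in\Delta_{P}} H_{Q}(X\,|\,Y,Z) = \max_{Q_{1}\in\Delta_{P_{1}}} H_{Q_{1}}(X_{1}\,|\,Y_{1},Z_{1}) + \max_{Q_{2}\in\Delta_{P_{2}}} H_{Q_{2}}(X_{2}\,|\,Y_{2},Z_{2}).
\end{equation*}
The upper bound follows from two applications of ``conditioning reduces entropy'' together with subadditivity: for any $Q\in\Delta_P$,
\begin{equation*}
  H_{Q}(X_{1},X_{2}\,|\,Y,Z) \le H_{Q}(X_{1}\,|\,Y,Z) + H_{Q}(X_{2}\,|\,Y,Z) \le H_{\bar Q_{1}}(X_{1}\,|\,Y_{1},Z_{1}) + H_{\bar Q_{2}}(X_{2}\,|\,Y_{2},Z_{2}),
\end{equation*}
and the right-hand side is bounded by the sum of the two subsystem maxima. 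The matching lower bound is obtained by taking $Q_1^*, Q_2^*$ optimal for the subsystems and evaluating the objective at $Q_1^* \otimes Q_2^* \in \Delta_P$, where independence forces the joint conditional entropy to split as a sum.

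Finally, I would transfer this additivity to the four quantities in the statement. For $\TCI$ and $\TSI$, use their definitions: both involve $\min_{Q \in \Delta_P} MI_Q(X:(Y,Z))$ together with the marginal terms $MI(X:Y)$, $MI(X:Z)$, $MI(X:(Y,Z))$; these marginal mutual informations are additive under the independence hypothesis, and the minimum is additive by the conditional entropy computation (using Lemma~\ref{lem:opt-prob} to translate). For the two $\TUI$ expressions, use $\min_{Q\in\Delta_P} MI_Q(X:Y|Z) = H(X) - \max_{Q\in\Delta_P} H_Q(X|Y,Z)$ together with $H(X) = H(X_1) + H(X_2)$, and symmetrically for the $Z$-$Y$ version. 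There is no real obstacle here; the main thing to get right is the chain of entropy inequalities and the observation that marginalization sends $\Delta_P$ into $\Delta_{P_i}$, which is what makes the subsystem optima relevant.
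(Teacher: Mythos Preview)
Your argument is correct and takes a genuinely different route from the paper.  The paper constructs the product $Q=Q_{1}\otimes Q_{2}$ of subsystem optimizers and then verifies, via the first-order conditions of Lemma~\ref{lem:crit-eq}, that $Q$ is a critical point (hence optimal by convexity) for the combined problem: the log-expression in~\eqref{eq:crit-ineq} factors into two subsystem terms, each non-negative by optimality of $Q_{1}$ and~$Q_{2}$.  You instead work directly with the conditional-entropy formulation of Lemma~\ref{lem:opt-prob}, proving the upper bound $\max_{Q\in\Delta_{P}}H_{Q}(X\mid Y,Z)\le\sum_{i}\max_{Q_{i}\in\Delta_{P_{i}}}H_{Q_{i}}(X_{i}\mid Y_{i},Z_{i})$ by subadditivity plus ``conditioning reduces entropy'' (together with the observation that the $(X_{i},Y_{i},Z_{i})$-marginal of any $Q\in\Delta_{P}$ lies in $\Delta_{P_{i}}$), and the matching lower bound by evaluating at a product of subsystem optimizers.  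Your approach is self-contained and avoids the critical-equation machinery of Appendix~\ref{sec:crit-eq}; the paper's approach is shorter once that machinery has been set up and, as a bonus, exhibits an explicit optimizer for the combined problem.

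One small slip: the identity you quote for $\TUI$ should read
\[
  \min_{Q\in\Delta_{P}} MI_{Q}(X:Y\mid Z)=H(X\mid Z)-\max_{Q\in\Delta_{P}}H_{Q}(X\mid Y,Z),
\]
with $H(X\mid Z)$ rather than $H(X)$ (the quantity $H(X)-\max_{Q}H_{Q}(X\mid Y,Z)$ is $\min_{Q}MI_{Q}(X:(Y,Z))$, not $\min_{Q}MI_{Q}(X:Y\mid Z)$).  This does not affect the argument, since $H_{P}(X\mid Z)$ is also additive under the independence hypothesis.
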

The proof of the last lemma is given in Appendix~\ref{sec:proofs}.

\section{Comparison with other measures}
\label{sec:comparison-with-I_HSP}

In this section we compare our information decomposition using $\TUI$, $\TSI$ and $\TCI$ with similar functions proposed
in other papers; in particular, the function $\Imin$ of~\cite{WilliamsBeer:Nonneg_Decomposition_of_Multiinformation} and
the bivariate redundancy measure $\IHSP$ of~\cite{HarderSalgePolani13:Bivariate_redundant_information}.  We do not
repeat their definitions here, since they are rather technical.

The first observation is that both $\IHSP$ and $\Imin$ satisfy assumption~\eqref{eq:assumption}.  Therefore,
$\IHSP\ge\TSI$ and $\Imin\ge\TSI$.  According to~\cite{HarderSalgePolani13:Bivariate_redundant_information}, $\Imin$
tends to be larger than~$\IHSP$, but there are some exceptions.

It is easy to find examples where $\Imin$ is unreasonably
large~\cite{HarderSalgePolani13:Bivariate_redundant_information,BROJ13:Shared_information}.  It is much more difficult
to distinguish $\IHSP$ and $\TSI$.  In fact, in many special cases the two measures $\IHSP$ and $\TSI$ agree, as
the following results show.
\begin{thm}
  \label{thm:IHSP-zero-iff-TSI-zero}
  $\IHSP(X:Y;Z)=0$ if and only if $\TSI(X:Y;Z)=0$.
\end{thm}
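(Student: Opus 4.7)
The plan is to prove the two directions of the biconditional separately.

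The direction $\IHSP(X:Y;Z) = 0 \Rightarrow \TSI(X:Y;Z) = 0$ is already in hand: since $\IHSP$ satisfies assumption~\eqref{eq:assumption} and is non-negative (as part of a bivariate information decomposition), the discussion preceding this theorem (equivalently, Lemma~\ref{lem:bounds}) gives $\IHSP \ge \TSI \ge 0$, so $\IHSP(X:Y;Z) = 0$ forces $\TSI(X:Y;Z) = 0$.

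For the converse, I would invoke Lemma~\ref{lem:TSI-zero} to rewrite the hypothesis as $MI_{Q_0}(Y:Z) = 0$, where $Q_0(x,y,z) = p(x)\,\kappa(x;y)\,\mu(x;z)$ is the distribution from the proof of Lemma~\ref{lem:nonneg}. Since under $Q_0$ the variables $Y$ and $Z$ are conditionally independent given $X$ by construction, the added marginal independence amounts to the identity
\begin{equation*}
  \sum_{x \in \Xcal} p(x)\,\kappa(x;y)\,\mu(x;z) \;=\; P(Y=y)\,P(Z=z), \qquad y \in \Ycal,\; z \in \Zcal.
\end{equation*}
This is a condition only on $(p, \kappa, \mu)$, as is $\IHSP$ by assumption~\eqref{eq:assumption}.

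The remaining step is to unfold the definition of $\IHSP$ from~\cite{HarderSalgePolani13:Bivariate_redundant_information}, which is built from Kullback--Leibler projections of the conditional distributions $P(Y \mid X=x)$ onto the convex hull of $\{P(Y \mid Z=z)\}_z$ in the simplex of distributions on $\Ycal$ (and symmetrically with the roles of $Y$ and $Z$ exchanged), and to show that the product identity above forces these projections to coincide with the unconditional marginal $P(Y)$ (respectively $P(Z)$), so that the KL terms defining $\IHSP$ collapse to zero.

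The main obstacle is precisely this last computation. The HSP construction is geometric and does not immediately speak the algebraic language of our product condition on $(p,\kappa,\mu)$. I expect one needs an intermediate lemma that identifies the HSP projection explicitly in the case where the mixture $\sum_x p(x)\kappa(x;y)\mu(x;z)$ factors as $P(Y)P(Z)$, and then verifies that both the ``$Y$-onto-$Z$'' and the ``$Z$-onto-$Y$'' contributions to $\IHSP$ vanish symmetrically. Care is also required when $p$ does not have full support, which can be handled by restricting $\Xcal$ to the support of~$X$ as in Section~\ref{sec:motivation}.
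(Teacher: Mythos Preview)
Your outline is correct and matches the paper's argument almost exactly. The forward direction is the same. For the converse, the paper does precisely what you anticipate: it isolates the ``intermediate lemma'' you ask for as Lemma~\ref{lem:IHSP-zero} (if both $\ind{Y}{Z}[X]$ and $\ind{Y}{Z}$, then $\IHSP(X:Y;Z)=0$), and then notes---just as you do via assumption~\eqref{eq:assumption}---that one may as well evaluate $\IHSP$ at $Q_{0}$, where both independence statements hold by construction and Lemma~\ref{lem:TSI-zero}.

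One correction to your sketch of the HSP machinery: the KL projections in~\cite{HarderSalgePolani13:Bivariate_redundant_information} are of the \emph{posteriors} $P(X\mid Y=y)$ onto the closed convex hull $C_{\text{cl}}(\langle Z\rangle_{X})$ of the posteriors $\{P(X\mid Z=z)\}_{z}$ (so these are distributions on~$\Xcal$, not on~$\Ycal$), and the projection is denoted $P_{y\searrow Z}$. The way the paper executes your ``main obstacle'' is a short critical-point computation: under $\ind{Y}{Z}[X]$ and $\ind{Y}{Z}$, the directional derivative of $Q\mapsto D(P(\cdot\mid y)\|Q)$ at $Q=P_{X}$ along any difference $P(X\mid z)-P(X\mid z')$ vanishes, so by joint convexity of the divergence $P_{X}$ is the minimizer, hence $P_{y\searrow Z}=P_{X}$, $I^{\pi}_{X}(Y\searrow Z)=0$, and $\IHSP=0$. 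With the roles of $X$ and $Y$ corrected in your description, your plan goes through verbatim.
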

The proof of the theorem builds on the following lemma:
\begin{lemma}
  \label{lem:IHSP-zero}
  If both $\ind{Y}{Z}[X]$ and~$\ind{Y}{Z}$, then $\IHSP(X:Y;Z)=0$.
\end{lemma}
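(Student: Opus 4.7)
The plan is to work from the definition of $\IHSP$ given in \cite{HarderSalgePolani13:Bivariate_redundant_information}, which expresses the bivariate redundancy as the minimum of two projected informations. Each projected information is built from the information-geometric Kullback--Leibler projection of one family of conditional distributions onto the convex hull of the other, and equals an expected KL-divergence between these projections and the prior $P(X)$. Writing $C_Y$ for the convex hull of $\{P(X \mid Y=y)\}_{y}$ inside the probability simplex on $\Xcal$, the projected information that captures ``information about $X$ carried by $Z$ and also accessible via $Y$'' vanishes precisely when, for every $z$, the KL-projection of $P(X \mid Z=z)$ onto $C_Y$ coincides with $P(X)$. I will show that this is exactly what the double-independence hypothesis guarantees, so that one of the two quantities whose minimum defines $\IHSP$ is already zero.

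First, I would prove the identity
\[
\sum_{x \in \Xcal} \frac{P(X=x \mid Y=y)\,P(X=x \mid Z=z)}{P(X=x)} = 1
\qquad \text{for every } y \in \Ycal,\, z \in \Zcal.
\]
Expanding each conditional by Bayes' rule, the left-hand side equals $\sum_{x} P(X=x)\,P(Y=y \mid X=x)\,P(Z=z \mid X=x)/[P(Y=y)P(Z=z)]$. By $\ind{Y}{Z}[X]$, the numerator collapses to $P(Y=y,Z=z)$, which under $\ind{Y}{Z}$ cancels the denominator exactly.

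Second, I would convert this identity into a KL-projection statement via Jensen's inequality. For any convex combination $d = \sum_{y} \alpha_y P(X \mid Y=y) \in C_Y$, the identity yields $\sum_{x} P(X=x \mid Z=z)\,d(x)/P(X=x) = \sum_y \alpha_y = 1$, and Jensen's inequality then gives $\sum_{x} P(X=x \mid Z=z) \log[d(x)/P(X=x)] \le 0$. Rearranging produces
\[
D\bigl(P(X \mid Z=z) \,\|\, d\bigr) \;\ge\; D\bigl(P(X \mid Z=z) \,\|\, P(X)\bigr) \qquad \text{for every } d \in C_Y.
\]
Since $P(X) = \sum_y P(Y=y)\,P(X \mid Y=y) \in C_Y$, the prior $P(X)$ is itself the KL-projection of $P(X \mid Z=z)$ onto $C_Y$ for every $z$. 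The corresponding projected information therefore vanishes, and $\IHSP(X:Y;Z) = 0$ follows from taking the minimum.

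The main step is the algebraic identity, whose derivation uses the two independence assumptions in a symmetric and essential way; the remainder is a routine application of Jensen's inequality and the characterization of KL-projections via the Pythagorean-type inequality. The only care needed is to confirm that HSP's projected information vanishes when the projections equal $P(X)$, but this is immediate from the fact that their quantity is defined as an expectation of divergences from the projections to the prior.
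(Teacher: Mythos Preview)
Your proof is correct and follows essentially the same route as the paper: both show that the KL-projection of each conditional $P(X\mid\cdot)$ onto the opposite convex hull coincides with the prior $P_X$, by exploiting the identity $\sum_x P(x\mid y)\,P(x\mid z)/P(x)=1$ that the two independence assumptions jointly force. The only difference is that you verify the minimization directly via Jensen's inequality, whereas the paper checks that the directional derivatives of $D(P(\cdot\mid y)\,\|\,Q)$ vanish at $Q=P_X$ and then invokes convexity to conclude that this critical point is the global minimizer.
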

The proof of the lemma is deferred to Appendix~\ref{sec:proofs}.
\begin{proof}[Proof of Theorem~\ref{thm:IHSP-zero-iff-TSI-zero}]
  By Lemma~\ref{lem:bounds}, if $\IHSP(X:Y;Z)=0$, then $\TSI(X:Y;Z)=0$.  Now assume that~$\TSI(X:Y;Z)=0$.  Since both
  $\TSI$ and $\IHSP$ are constant on~$\Delta_{P}$, we may assume that $P=Q_{0}$; that is, we may assume that
  $\ind{Y}{Z}[X]$.  Then $\ind{Y}{Z}$ by Lemma~\ref{lem:TSI-zero}.  Therefore, Lemma~\ref{lem:IHSP-zero} implies that
  $\IHSP(X:Y;Z)=0$.
\end{proof}
Denote by $\UIHSP$ the unique information defined from $\IHSP$ and~\eqref{eq:MI-decomposition-2}.  Then:
\begin{thm}
  \label{thm:UIHSP-zero-TUI-zero}
  $\UIHSP(X:Y\setminus Z)=0$ if and only if $\TUI(X:Y\setminus Z)=0$
\end{thm}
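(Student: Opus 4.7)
The plan is to prove the two implications separately, with the $\Leftarrow$ direction being essentially immediate from the general bounds and the $\Rightarrow$ direction requiring the specific structure of $\IHSP$. For the easy direction, I would first note that since $\IHSP$ satisfies assumption~\eqref{eq:assumption} and $\UIHSP$ is obtained from $\IHSP$ via~\eqref{eq:MI-decomposition-2}, $\UIHSP$ also depends only on $p$, $\kappa$ and $\mu$. Hence Lemma~\ref{lem:bounds} applies and yields $\UIHSP(X:Y\setminus Z) \le \TUI(X:Y\setminus Z)$; together with the non-negativity of $\UIHSP$, this immediately gives $\TUI(X:Y\setminus Z) = 0 \Rightarrow \UIHSP(X:Y\setminus Z) = 0$.

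For the forward direction, I plan to reduce $\TUI(X:Y\setminus Z) = 0$ to a Blackwell-type garbling condition via Lemma~\ref{lem:no-unique} and Corollary~\ref{cor:Blackwell}: it suffices to exhibit a row-stochastic matrix $\lambda \in [0,1]^{\Zcal \times \Ycal}$ satisfying $P(X=x,Y=y) = \sum_{z} P(X=x,Z=z)\,\lambda(z;y)$, equivalently $\kappa = \mu\lambda$, equivalently that each posterior $P(X|Y=y)$ lies in the convex hull of the posteriors $\{P(X|Z=z)\}_{z}$. The assumption $\UIHSP(X:Y\setminus Z) = 0$ rewrites as the saturation $\IHSP(X:Y;Z) = MI(X:Y)$, which I would unpack via the Harder-Salge-Polani construction: their shared information is built from the Kullback-Leibler projection of the $Y$-posteriors onto the closed convex hull of the $Z$-posteriors. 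Since the KL-projection equals its argument precisely when the argument already lies in the target set, saturation forces exactly the convex-hull containment needed; the garbling matrix $\lambda$ is then recovered (after the standard Bayesian rescaling that turns convex coefficients on posteriors into a row-stochastic matrix indexed by $\Zcal\times\Ycal$), and Lemma~\ref{lem:no-unique} concludes.

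The main obstacle is the careful verification of the saturation statement for $\IHSP$: namely, that $\IHSP(X:Y;Z) = MI(X:Y)$ forces every $P(X|Y=y)$ into the closed convex hull of $\{P(X|Z=z)\}_{z}$, and not merely into some weaker approximating set. This requires analysing the particular optimisation underlying the Harder-Salge-Polani projected information and invoking the strict positivity of the KL divergence outside the projection target, in order to conclude that vanishing projected information really does place each posterior inside the hull. These technical details naturally belong in Appendix~\ref{sec:proofs}, in parallel with the treatment of Lemma~\ref{lem:IHSP-zero}.
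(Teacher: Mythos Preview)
Your proposal is correct and follows essentially the same route as the paper: the easy direction via Lemma~\ref{lem:bounds}, and the converse by unwinding $\IHSP(X:Y;Z)=MI(X:Y)$ through the Harder--Salge--Polani projection $p_{y\searrow Z}$ to conclude that each posterior $P(X\mid Y=y)$ lies in the convex hull of the $Z$-posteriors, then invoking Lemma~\ref{lem:no-unique}. The paper dispatches the saturation step by citing equation~(20) of~\cite{HarderSalgePolani13:Bivariate_redundant_information} rather than spelling it out in the appendix, but the underlying argument is the one you describe.
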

\begin{proof}
  By Lemma~\ref{lem:bounds}, if $\TUI$ vanishes, then so does $\UIHSP$.  Conversely, $\UIHSP(X:Y\setminus Z)=0$ if
  and only if $\IHSP(X:Y;Z) = MI(X:Y)$.  By (20) in~\cite{HarderSalgePolani13:Bivariate_redundant_information}, this is
  equivalent to $p(x|y) = p_{y\searrow Z}(x)$ for all $x,y$.  In this case, $p(x|y) = \sum_{z}p(x|z)\lambda(z;y)$ for
  some $\lambda(z;y)$ with $\sum_{z}\lambda(z;y)=1$.
  Hence, Lemma~\ref{lem:no-unique} implies that $\TUI(X:Y\setminus Z)=0$.
\end{proof}
Theorem~\ref{thm:UIHSP-zero-TUI-zero} implies that $\IHSP$ does not contradict our operational ideas introduced in
Section~\ref{sec:motivation}.
\begin{cor}
  \label{cor:Ired-is-TSI-1}
  Suppose that one of the following conditions is satisfied:
  \begin{enumerate}
  \item $X$ is independent of $Y$ given~$Z$.
  \item $X$ is independent of $Z$ given~$Y$.
  \item $\Xcal=\Ycal\times\Zcal$, and $X=(Y,Z)$.
  \end{enumerate}
  Then $\IHSP(X:Y;Z) = \TSI(X:Y;Z)$.
\end{cor}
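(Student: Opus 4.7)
The plan is to handle each of the three cases by reducing it to a statement about vanishing unique information, which can then be combined with Theorem~\ref{thm:UIHSP-zero-TUI-zero} (for cases (1) and~(2)) or compared directly to the identity axiom satisfied by $\IHSP$ (for case~(3)). I would organise the proof as three short, mostly independent paragraphs.

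For case~(2), when $X$ is independent of $Z$ given~$Y$, Lemma~\ref{lem:X-indep-Z-given-Y} directly yields $\TUI(X:Z\setminus Y) = 0$ and $\TSI(X:Y;Z) = MI(X:Z)$. By Theorem~\ref{thm:UIHSP-zero-TUI-zero}, the vanishing of $\TUI(X:Z\setminus Y)$ forces $\UIHSP(X:Z\setminus Y) = 0$. Inserting this into the defining decomposition $MI(X:Z) = \IHSP(X:Y;Z) + \UIHSP(X:Z\setminus Y)$ gives $\IHSP(X:Y;Z) = MI(X:Z) = \TSI(X:Y;Z)$, which is the desired equality.

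For case~(1), when $X$ is independent of $Y$ given~$Z$, the argument is the mirror image. Both $\TSI$ and $\IHSP$ are symmetric in $Y$ and~$Z$ (Lemma~\ref{lem:symmetry-consistency} and the bivariate PI axiom~A), and $\Delta_{P}$ is likewise symmetric, so the roles of $Y$ and $Z$ may be swapped throughout the argument of case~(2). This gives $\TUI(X:Y\setminus Z) = 0$, hence $\UIHSP(X:Y\setminus Z) = 0$ by Theorem~\ref{thm:UIHSP-zero-TUI-zero}, and therefore $\IHSP(X:Y;Z) = MI(X:Y) = \TSI(X:Y;Z)$.

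For case~(3), when $X = (Y,Z)$, no Blackwell-type argument is needed: Proposition~\ref{prop:X-is-YZ} computes $\TSI((Y,Z):Y;Z) = MI(Y:Z)$, and $\IHSP$ satisfies the identity axiom of~\cite{HarderSalgePolani13:Bivariate_redundant_information} by construction, so $\IHSP((Y,Z):Y;Z) = MI(Y:Z)$ as well, and the two quantities agree trivially. I do not expect any serious obstacle: the heavy lifting has already been done in Lemma~\ref{lem:X-indep-Z-given-Y}, Proposition~\ref{prop:X-is-YZ}, and Theorem~\ref{thm:UIHSP-zero-TUI-zero}; the only mildly delicate point is invoking the identity axiom as an external property of~$\IHSP$ rather than something proved here, but this is entirely consistent with how the comparison of measures is being carried out throughout Section~\ref{sec:comparison-with-I_HSP}.
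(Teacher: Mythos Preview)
Your proof is correct and follows essentially the same route as the paper. The only minor difference is that for cases~(1) and~(2) the paper invokes Remark~\ref{rem:X-indep-Z-given-Y} directly (any binary information decomposition satisfies $UI(X:Z\setminus Y)=0$ whenever $MI(X:Z|Y)=0$, so $\UIHSP(X:Z\setminus Y)=0$ without detouring through~$\TUI$), whereas you first establish $\TUI=0$ and then transfer this to $\UIHSP$ via Theorem~\ref{thm:UIHSP-zero-TUI-zero}; the paper in fact mentions your route parenthetically, and case~(3) is handled identically.
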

\begin{proof}
  If $X$ is independent of $Z$ given~$Y$, then, by Remark~\ref{rem:X-indep-Z-given-Y}, for any binary information
  decomposition, $UI(X:Z\setminus Y)=0$.  In particular, $\TUI(X:Z\setminus Y)=0=\UIHSP(X:Z\setminus Y)$ (compare also
  Lemma~\ref{lem:X-indep-Z-given-Y} and Theorem~\ref{thm:UIHSP-zero-TUI-zero}).  Therefore, $\IHSP(X:Y;Z) =
  \TSI(X:Y;Z)$.  If $\Xcal=\Ycal\times\Zcal$ and $X=(Y,Z)$, then $\TSI(X:Y;Z)=MI(Y:Z)=\IHSP(X:Y;Z)$ by
  Proposition~\ref{prop:X-is-YZ} and the identity axiom in~\cite{HarderSalgePolani13:Bivariate_redundant_information}.
\end{proof}
\begin{cor}
  \label{cor:Ired-is-TSI-2}
  If the two pairs $(X,Y)$ and $(X,Z)$ have the same marginal distribution, then $\IHSP(X:Y;Z) = \TSI(X:Y;Z)$.
\end{cor}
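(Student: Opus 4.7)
The plan is to prove the equality by sandwiching $\IHSP(X:Y;Z)$ between two quantities that both equal $\TSI(X:Y;Z)$. The hypothesis that $(X,Y)$ and $(X,Z)$ share the same marginal distribution implicitly identifies $\Ycal$ with $\Zcal$ (after restricting to supports), so Corollary~\ref{cor:ident-dist} is applicable and yields
\[
  \TSI(X:Y;Z) \;=\; MI(X:Y) \;=\; MI(X:Z).
\]

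Next, I would invoke the bivariate monotonicity property~B) of Section~\ref{sec:PI-axioms}, which $\IHSP$ satisfies as a consequence of the PI axioms of~\cite{HarderSalgePolani13:Bivariate_redundant_information}. This gives the upper bound
\[
  \IHSP(X:Y;Z) \;\le\; MI(X:Y).
\]
For the lower bound, I would use the observation recorded at the beginning of Section~\ref{sec:comparison-with-I_HSP}: since $\IHSP$ satisfies assumption~\eqref{eq:assumption}, Lemma~\ref{lem:bounds} gives $\IHSP(X:Y;Z) \ge \TSI(X:Y;Z)$.

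Combining the three displayed relations produces the chain
\[
  \TSI(X:Y;Z) \;\le\; \IHSP(X:Y;Z) \;\le\; MI(X:Y) \;=\; \TSI(X:Y;Z),
\]
forcing equality throughout, which is exactly the claim. No step of the argument requires new computation; each inequality has already been established elsewhere in the paper or in the cited reference. The only mild subtlety is justifying the application of Corollary~\ref{cor:ident-dist}, but as noted above, the hypothesis of equal marginals is only meaningful once $\Ycal$ and $\Zcal$ are identified, so the corollary applies directly. Hence there is no real obstacle; this is a short sandwich argument.
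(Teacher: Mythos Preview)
Your argument is correct and is essentially the same sandwich as the paper's. The paper's one-line proof records that $\TUI(X:Y\setminus Z)=0=\UIHSP(X:Y\setminus Z)$ (the first equality from Corollary~\ref{cor:ident-dist}, the second from the bound $\UIHSP\le\TUI$ in Lemma~\ref{lem:bounds}), and then both shared-information quantities equal $MI(X:Y)$ via~\eqref{eq:MI-decomposition-2}; you reach the same conclusion by invoking the monotonicity bound $\IHSP\le MI(X:Y)$ directly instead of passing through the unique information, which is a cosmetic rather than substantive difference.
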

\begin{proof}
  In this case, $\TUI(X:Y;Z)=0=\UIHSP(X:Y;Z)$.
\end{proof}

Although $\TSI$ and $\IHSP$ often agree, they are different functions.  An example where $\TSI$ and $\IHSP$ have
different values is the dice example given at the end of the next section.  In particular, it follows that $\IHSP$ does
not satisfy property~\eqref{eq:assumption-two}.

\section{Examples}
\label{sec:examples}

Table~\ref{tab:examples} contains the values of $\TCI$ and $\TSI$ for some paradigmatic examples.  The list of examples
is taken from~\cite{HarderSalgePolani13:Bivariate_redundant_information}; see also~\cite{GriffithKoch13:Quantifying}.
In all these examples, $\TSI$ agrees with~$\IHSP$.
In particular, in these examples the values of $\TSI$ agree with the intuitively plausible values called ``expected
values'' in~\cite{HarderSalgePolani13:Bivariate_redundant_information}.


\begin{table}
  \centering
  \begin{tabular}{lcccl}
  Example     & $\TCI$ & $\TSI$ & $\Imin$ & Note \\
  \hline
  \textsc{Rdn}$^{\phantom{|^{|}}}$ & 0 & 1 & 1 & $X=Y=Z$ uniformly distributed
  \\
  \textsc{Unq} & 1 & 0 & 1 & $X=(Y,Z)$, $Y,Z$ i.i.d.
  \\
  \textsc{Xor} & 1 & 0 & 0 & $X=Y\operatorname{XOR}Z$, $Y,Z$ i.i.d.
  \\
  \textsc{And} & 1/2 & 0.311 & 0.311 & $X=Y\operatorname{AND}Z$, $Y,Z$ i.i.d.
  %
  \\
  \\
  \textsc{RdnXor} & 1 & 1 & 1 & $X=(Y_{1}\operatorname{XOR}Z_{1},W)$,\\
  & & & & $Y=(Y_{1},W)$, $Z=(Z_{1},W)$, $Y_{1},Z_{1},W$ i.i.d.
  \\
  \textsc{RdnUnqXor}$\!\!\!\!\!\!$ & 1 & 1 & 2 & $X=(Y_{1}\operatorname{XOR}Z_{1},(Y_{2},Z_{2}),W)$, $Y=(Y_{1},Y_{2},W)$, \\
  & & & & $Z=(Z_{1},Z_{2},W)$, $Y_{1},Y_{2},Z_{1},Z_{2},W$ i.i.d.
  \\
  \textsc{XorAnd} & 1 & 1/2 & 1/2 & $X=(Y\operatorname{XOR}Z,Y\operatorname{AND}Z)$, $Y,Z$ i.i.d. 
  \\ \\
  Copy & 0 & $\!\!\!MI(X:Y)\!\!\!\!\!\!\!$ & 1 & $X=(Y,Z)$
\end{tabular}

\caption{The value of $\TSI$ in some examples.  The note is a short explanation of the example; see~\cite{HarderSalgePolani13:Bivariate_redundant_information} for the details.}
\label{tab:examples}
\end{table}

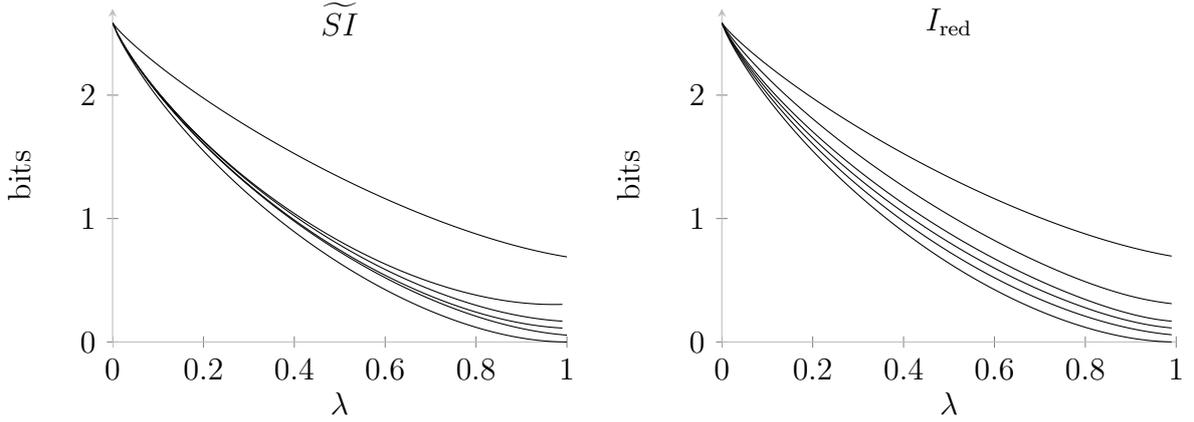
\begin{figure}
  \centering
  \begin{tikzpicture}
    \begin{axis}[articleplot,xlabel={$\lambda$}, ylabel={bits}, width=0.48\linewidth, height=6cm, xmin=0, xmax=1, ymin=0,
      ymax=2.7,title={$\TSI$},title style={yshift=-20pt}]  
      \addplot[black] table[x index=0, y index=1] {data/Dice1t.dat}; 
      \addplot[black] table[x index=0, y index=1] {data/Dice2t.dat}; 
      \addplot[black] table[x index=0, y index=1] {data/Dice3t.dat}; 
      \addplot[black] table[x index=0, y index=1] {data/Dice4t.dat}; 
      \addplot[black] table[x index=0, y index=1] {data/Dice5t.dat}; 
      \addplot[black] table[x index=0, y index=1] {data/Dice6t.dat}; 
    \end{axis}
  \end{tikzpicture}
  \hfill
  \begin{tikzpicture}
    \begin{axis}[articleplot,xlabel={$\lambda$}, ylabel={bits}, width=0.48\linewidth, height=6cm, xmin=0, xmax=1, ymin=0,
      ymax=2.7,title={$\IHSP$},title style={yshift=-20pt}]  
      \addplot[black] table[x index=0, y index=5] {data/dice1.dat}; 
      \addplot[black] table[x index=0, y index=5] {data/dice2.dat}; 
      \addplot[black] table[x index=0, y index=5] {data/dice3.dat}; 
      \addplot[black] table[x index=0, y index=5] {data/dice4.dat}; 
      \addplot[black] table[x index=0, y index=5] {data/dice5.dat}; 
      \addplot[black] table[x index=0, y index=5] {data/dice6.dat}; 
    \end{axis}
  \end{tikzpicture}
  \caption{The shared information measures $\TSI$ or~$\IHSP$ in the dice example depending on the correlation parameter
    $\lambda$ (figure on the right reproduced from~\cite{HarderSalgePolani13:Bivariate_redundant_information}).  The
    summation parameter $\alpha$ varies from 1 (uppermost line) to 6 (lowest line).}
\label{fig:dice}
\end{figure}
As a more complicated example we treated the following system with two parameters $\lambda\in[0,1]$,
$\alpha\in\{1,2,3,4,5,6\}$, also proposed by~\cite{HarderSalgePolani13:Bivariate_redundant_information}.  Let $Y$ and
$Z$ be two dice, and define $X=Y + \alpha Z$.  To change the degree of dependence of the two dice, assume that they are
distributed according to
\begin{equation*}
  P(Y=i,Z=j) = \frac{\lambda}{36} + (1-\lambda)\frac{\delta_{i,j}}{6}.
\end{equation*}
For $\lambda=0$ the two dice are completely correlated, while for $\lambda=1$ they are independent.  The resulting
shared information is shown in Figure~\ref{fig:dice}.  As a comparison, we reproduce Figure~8
from~\cite{HarderSalgePolani13:Bivariate_redundant_information} showing the function $\IHSP$ in the same example.  In
fact, for $\alpha=1$, $\alpha=5$ and $\alpha=6$ the two functions agree.  Moreover, they agree for $\lambda=0$ and
$\lambda=1$.  In all other cases, $\TSI\le\IHSP$, in agreement with Lemma~\ref{lem:bounds}.  For $\alpha=1$ and
$\alpha=6$ and $\lambda=0$ the fact that $\IHSP=\TSI$ follows from the results in
Section~\ref{sec:comparison-with-I_HSP}; in the other cases we do not know a simple reason for this coincidence.

It is interesting to note that for small $\lambda$ and $\alpha>1$ the function $\TSI$ depends only weakly on~$\alpha$.
In contrast, the dependence of $\IHSP$ on $\alpha$ is stronger.  At the moment we do not have an argument that tells us
which of these two behaviours is more intuitive.

\section{Outlook}
\label{sec:several-variables}

We defined a decomposition of the mutual information $MI(X:(Y,Z))$ of a random variable $X$ with a pair of random
variables~$(Y,Z)$ into non-negative terms which have an interpretation in terms of shared information, unique
information and synergistic information.  We have shown that the quantities $\TSI$, $\TCI$ and $\TUI$ have many
properties that such a decomposition should intuitively fulfil; among them the PI axioms and the identity axiom.  It is
a natural question whether the same can be done when further random variables are added to the system.

The first question in this context is how the decomposition of $MI(X:Y_{1},\dots,Y_{n})$ should look like.  How
many terms do we need?  In the bivariate case $n=2$, many people agree that shared, unique and synergistic information
should provide a complete decomposition (but it may well be worth to look for other types of decompositions).  For
$n>2$, there is no universal agreement of this kind.

Williams and Beer proposed a framework that suggests to construct an information decomposition only in terms of shared
information~\cite{WilliamsBeer:Nonneg_Decomposition_of_Multiinformation}.  Their ideas naturally lead to a decomposition
according to a lattice, called PI lattice.  For example, in this framework, $MI(X:Y_{1},Y_{2},Y_{3})$ has to be
decomposed into 18 terms with well-defined interpretation.  The approach is very appealing, since it is only based on
very natural properties of shared information (the PI axioms) and the idea that all information can be ``localized,'' in
the sense that, in an information decomposition, it suffices to classify information according to ``who knows what,''
that is, which information is shared by which subsystems.

Unfortunately, as shown in~\cite{BROJ13:Shared_information}, our function $\TSI$ cannot be generalized to the case~$n=3$
in the framework of the PI lattice.  The problem is that the identity axiom is incompatible with a non-negative
decomposition according to the PI lattice.

Even though we currently cannot extend our decomposition to~$n>2$, our bivariate decomposition can be useful for the
analysis of larger systems consisting of more than two parts.  For example, the quantity
\begin{equation*}
  \TUI(X:Y_{i}\setminus (Y_{1},\ldots,Y_{i-1},Y_{i+1},\ldots,Y_{n}))
\end{equation*}
can still be interpreted as the unique information of $Y_{i}$
about~$X$ with respect to all other variables, and it can be used to
assess the value of the $i$th variable, when synergistic contributions
can be ignored.  Furthermore the measure has the intuitive property
that the unique information cannot grow when additional variables are
taken into account:
\begin{lemma}
$\TUI(X:Y \setminus (Z_{1},\ldots,Z_{k})) \geq \TUI(X:Y \setminus (Z_{1},\ldots,Z_{k+1}))$.
\end{lemma}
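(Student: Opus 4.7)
The plan is to take a minimizer of the optimization problem defining the left-hand side and use it to build a feasible point for the right-hand side whose objective value is no larger. Write $Z=(Z_1,\ldots,Z_k)$ and $W=Z_{k+1}$, and let $Q^\ast$ realize $\TUI(X:Y\setminus Z)=\min_{Q\in\Delta_P}MI_Q(X:Y\mid Z)$, where the polytope $\Delta_P$ on $(X,Y,Z)$ consists of distributions matching the $(X,Y)$- and $(X,Z)$-marginals of~$P$.

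The natural candidate is
$$Q'(x,y,z,w) := Q^\ast(x,y,z)\,P(w\mid x,z),$$
with the conditional $P(w\mid x,z)$ defined from $P$ (and arbitrary when $P(x,z)=0$, in which case $Q^\ast(x,y,z)=0$ as well, so the choice is immaterial). First I would check that $Q'$ lies in the polytope corresponding to the enlarged conditioning block $(Z,W)$: summing over $z,w$ yields $Q^\ast(x,y)=P(x,y)$, and summing over $y$ yields $Q^\ast(x,z)\,P(w\mid x,z)=P(x,z,w)$, so both marginal constraints hold.

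The key structural observation is that under $Q'$ one has $Q'(y\mid x,z,w)=Q^\ast(y\mid x,z)$, which does not depend on $w$, so $Y$ is conditionally independent of $W$ given $(X,Z)$, i.e.\ $MI_{Q'}(Y:W\mid X,Z)=0$. Expanding $MI_{Q'}(Y:(X,W)\mid Z)$ by the chain rule in two different orders gives
$$MI_{Q'}(X:Y\mid Z,W)+MI_{Q'}(Y:W\mid Z)=MI_{Q'}(X:Y\mid Z)+MI_{Q'}(Y:W\mid X,Z),$$
and the last summand vanishes. Dropping the non-negative term $MI_{Q'}(Y:W\mid Z)$ and using that the $(X,Y,Z)$-marginal of $Q'$ is $Q^\ast$, one obtains
$$MI_{Q'}(X:Y\mid Z,W)\le MI_{Q'}(X:Y\mid Z)=MI_{Q^\ast}(X:Y\mid Z)=\TUI(X:Y\setminus Z).$$
Feasibility of $Q'$ then bounds $\TUI(X:Y\setminus(Z_1,\ldots,Z_{k+1}))$ by the same quantity, which is the claim.

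I do not expect a serious obstacle: the construction is forced by the marginal constraints, and the only minor subtlety is the harmless definition of $P(w\mid x,z)$ on the zero-probability part of $(X,Z)$. Conceptually the argument says that one may always extend a candidate distribution for the smaller conditioning block to the larger block by splicing in the true $P$-conditional of $W$, and doing so makes $W$ conditionally independent of $Y$ given $(X,Z)$, which is precisely what forces the conditional mutual information $MI(X:Y\mid\cdot)$ not to increase.
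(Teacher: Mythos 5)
Your proposal is correct and follows essentially the same route as the paper's proof: the same extension $Q'(x,y,z,w)=Q(x,y,z)\,P^{k+1}(x,z,w)/P^{k}(x,z)$, the same observation that $Z_{k+1}$ becomes conditionally independent of $Y$ given $(X,Z_1,\dots,Z_k)$, and the same chain-rule comparison of $MI(X:Y\mid Z,W)$ with $MI((X,W):Y\mid Z)$. The only cosmetic difference is that you start from a minimizer while the paper extends an arbitrary $Q\in\Delta_{P^{k}}$ and minimizes at the end.
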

\begin{proof}
  Let $P^{k}$ be the joint distribution of $X,Y,Z_{1},\dots,Z_{k}$, and let $P^{k+1}$ be the joint distribution of
  $X,Y,Z_{1},\dots,Z_{k},Z_{k+1}$.  By definition, $P^{k}$ is a marginal of~$P^{k+1}$.  For any $Q\in\Delta_{P^{k}}$,
  the distribution $Q'$ defined by
  \begin{equation*}
    Q'(x,y,z_{1},\dots,z_{k},z_{k+1}) :=
    \begin{cases}
      \frac{Q(x,y,z_{1},\dots,z_{k})P^{k+1}(x,z_{1},\dots,z_{k},z_{k+1})}{P^{k}(x,z_{1},\dots,z_{k})}, & \text{ if }P^{k}(x,z_{1},\dots,z_{k})>0, \\
      0, & \text{ else},
    \end{cases}
  \end{equation*}
  lies in~$\Delta_{P^{k+1}}$.  Moreover, $Q$ is the $(X,Y,Z_{1},\dots,Z_{k})$-marginal of~$Q'$, and $Z_{k+1}$ is
  independent of $Y$ given $X,Z_{1},\dots,Z_{k}$.  Therefore,
  \begin{align*}
    MI_{Q'}(X:Y|Z_{1},&\dots,Z_{k},Z_{k+1}) 
    \le MI_{Q'}(X,Z_{k+1}:Y|Z_{1},\dots,Z_{k})
    \\ &
    = MI_{Q'}(X:Y|Z_{1},\dots,Z_{k}) + MI_{Q'}(Z_{k+1}:Y|X,Z_{1},\dots,Z_{k})
    \\ &
    \le MI_{Q'}(X:Y|Z_{1},\dots,Z_{k}) = MI_{Q}(X:Y|Z_{1},\dots,Z_{k}).
  \end{align*}
  The statement now follows by taking the minimum over $Q\in\Delta_{P^{k}}$.
\end{proof}

Thus, we believe that our measure, which is well-motivated in
operational terms, can serve as a good starting point towards a
general decomposition of multi-variate information.




\appendix

\section*{Appendix: Computing $\TUI$, $\TSI$ and $\TCI$}  
\stepcounter{section}
\label{sec:computing}

\subsection[The optimization domain]{The optimization domain $\Delta_P$}
\label{sec:DeltaP}

By Lemma~\ref{lem:opt-prob}, to compute the quantities $\TUI$, $\TSI$ and $\TCI$, we need to solve a convex optimization
problem.  In this section we study some aspects of this problem.

First we describe $\Delta_{P}$.  For any set $\Scal$ let $\Delta(\Scal)$ be the set of probability distributions
on~$\Scal$, and let $A$ be the map $\Delta\to\Delta(\Xcal\times\Ycal)\times\Delta(\Xcal\times\Zcal)$
that takes a joint probability distribution of $X$, $Y$ and~$Z$ and computes the marginal distributions of the pairs
$(X,Y)$ and~$(X,Z)$.  Then $A$ is a linear map, and $\Delta_{P}=(P + \ker A)\cap\Delta$.  In particular, $\Delta_{P}$ is
the intersection of an affine space and a simplex; hence $\Delta_{P}$ is a polytope.

The matrix describing $A$ (and denoted by the same symbol in the following) is a well-studied object.  For example, $A$
describes the graphical model associated with the graph $Y$---$X$---$Z$.  The columns of $A$ define a polytope, called
\emph{marginal polytope}.  Moreover, the kernel of $A$ is known: Let
$\delta_{x,y,z}\in\Rb^{\Xcal\times\Ycal\times\Zcal}$ be the characteristic function of the point
$(x,y,z)\in\Xcal\times\Ycal\times\Zcal$, and let
\begin{equation*}
  \gamma_{x;y,y';z,z'} = \delta_{x,y,z} + \delta_{x,y',z'} - \delta_{x,y',z} - \delta_{x,y,z'}.
\end{equation*}
\begin{lemma}
  \label{lem:defect-A}
  The defect of~$A$ (that is, the dimension of~$\ker A$) is $|\Xcal|(|\Ycal|-1)(|\Zcal|-1)$.
  \begin{itemize}
  \item 
    The functions $\gamma_{x;y,y';z,z'}$ for all $x\in\Xcal$, $y,y'\in\Ycal$ and $z,z'\in\Zcal$ span $\ker A$.
  \item 
    For any fixed $y_{0}\in\Ycal$, $z_{0}\in\Zcal$, the functions $\gamma_{x;y_{0},y;z_{0},z}$ for all $x\in\Xcal$,
    $y\in\Ycal\setminus\{y_{0}\}$ and $z\in\Zcal\setminus\{z_{0}\}$ form a basis of $\ker A$.
  \end{itemize}
\end{lemma}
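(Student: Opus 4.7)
The plan is to prove the three assertions (dimension of $\ker A$, spanning, and basis) by first computing the rank of $A$, then verifying the $\gamma$'s lie in $\ker A$, and finally establishing linear independence of the proposed basis; spanning then follows by a dimension count.

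First I would compute the defect of~$A$. The map $A$ decomposes as a direct sum over $x\in\Xcal$: for each fixed $x$, the restriction $A_x$ sends $(Q(x,y,z))_{y,z}$, viewed as a $|\Ycal|\times|\Zcal|$ array, to its row sums and column sums. The image of $A_x$ is precisely the set of pairs $(u,v)\in\Rb^{\Ycal}\times\Rb^{\Zcal}$ satisfying the single linear relation $\sum_y u(y) = \sum_z v(z)$, so $\operatorname{rank} A_x = |\Ycal|+|\Zcal|-1$. Summing over $x$ gives $\operatorname{rank} A = |\Xcal|(|\Ycal|+|\Zcal|-1)$, hence $\dim \ker A = |\Xcal||\Ycal||\Zcal| - |\Xcal|(|\Ycal|+|\Zcal|-1) = |\Xcal|(|\Ycal|-1)(|\Zcal|-1)$.

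Next, I would check directly that $\gamma_{x;y,y';z,z'}\in\ker A$: projecting onto the $(X,Y)$-marginal, the contributions at $(x,y)$ sum to $1-1=0$ and at $(x,y')$ to $1-1=0$, and similarly for the $(X,Z)$-marginal; this is an immediate calculation on the four supporting $\delta$'s. So the whole family lies in $\ker A$. Now fix $y_0\in\Ycal$ and $z_0\in\Zcal$ and consider the subfamily $\mathcal{B}=\{\gamma_{x;y_0,y;z_0,z}:x\in\Xcal,\,y\ne y_0,\,z\ne z_0\}$. This subfamily has cardinality $|\Xcal|(|\Ycal|-1)(|\Zcal|-1)$, matching $\dim\ker A$ exactly.

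Linear independence of $\mathcal{B}$ is the main technical point, but it is easy: fix $(x^*,y^*,z^*)$ with $y^*\ne y_0$ and $z^*\ne z_0$, and look at the coefficient of $\delta_{x^*,y^*,z^*}$ in $\gamma_{x;y_0,y;z_0,z}$. Inspecting $\gamma_{x;y_0,y;z_0,z} = \delta_{x,y_0,z_0}+\delta_{x,y,z}-\delta_{x,y,z_0}-\delta_{x,y_0,z}$, the only way $\delta_{x^*,y^*,z^*}$ can appear is as the $+\delta_{x,y,z}$ term (the other three terms all have at least one of $y_0,z_0$ in their last two coordinates, whereas $y^*\ne y_0$ and $z^*\ne z_0$). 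Hence only the single member $\gamma_{x^*;y_0,y^*;z_0,z^*}$ contributes, and any linear dependency $\sum c_{x,y,z}\gamma_{x;y_0,y;z_0,z}=0$ forces $c_{x^*,y^*,z^*}=0$ for every such triple, giving linear independence. Combined with the dimension count this shows $\mathcal{B}$ is a basis of $\ker A$.

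Finally, since $\mathcal{B}$ is a basis and is contained in the full family $\{\gamma_{x;y,y';z,z'}\}$, the latter automatically spans $\ker A$. (If an explicit identity is desired, one may verify by direct expansion that
\[
\gamma_{x;y,y';z,z'} = \gamma_{x;y_0,y';z_0,z'} - \gamma_{x;y_0,y;z_0,z'} - \gamma_{x;y_0,y';z_0,z} + \gamma_{x;y_0,y;z_0,z},
\]
exhibiting the general $\gamma$ as a combination of basis elements.) The only mildly delicate step is the rank computation for~$A$, but it reduces immediately to the well-known rank of the row-and-column-sum map on an $|\Ycal|\times|\Zcal|$ matrix.
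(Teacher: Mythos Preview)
Your proof is correct. The paper does not actually prove this lemma at all: its entire proof is a one-line reference to Ho\c{s}ten and Sullivant's work on reducible and cyclic models, so you have supplied a self-contained elementary argument where the paper defers to the literature. Your approach---computing the rank of $A$ via the $x$-by-$x$ decomposition into row/column-sum maps, then isolating the coefficient of $\delta_{x^*,y^*,z^*}$ (with $y^*\neq y_0$, $z^*\neq z_0$) to obtain linear independence of the proposed basis---is the standard direct argument for this kind of marginal map, and all steps check out, including the explicit four-term expansion of a general $\gamma$ in terms of the basis.
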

\begin{proof}
  See~\cite{HostenSullivant02:Reducible_and_Cyclic_Models}.
\end{proof}

The vectors $\gamma_{x;y,y';z,z'}$ for different values of~$x\in\Xcal$ have disjoint supports.  As the next lemma shows,
this can be used to write $\Delta_{P}$ as a Cartesian product of simpler polytopes.  Unfortunately, the function
$MI(X:(Y,Z))$ does not respect this product structures. In fact, the diagonal directions are important (see
Example~\ref{ex:illconditioned} below).
\begin{lemma}
  \label{lem:polytope-DeltaP}
  Let $P\in\Delta$.  For all $x\in\Xcal$ with $P(x)>0$ denote by
  \begin{multline*}
    \Delta_{P,x} = \Big\{ Q\in\Delta(\Ycal\times\Zcal) : Q(Y=y)=P(Y=y|X=x)\\
    \text{ and }Q(Z=z)=P(Z=z|X=x)\Big\}
  \end{multline*}
  the set of joint distributions of $Y$ and $Z$ such that the marginal distributions of $Y$ and $Z$ agree with the
  conditional distributions of $Y$ and $Z$ given~$X=x$.
  Then the map $\pi_{P}:\Delta_{P}\mapsto\bigtimes_{x\in\Xcal:P(x)>0}\Delta_{P,x}$ that maps each $Q\in\Delta_{P}$ to
  the family $(Q(\cdot|X=x))_{x\in\Xcal:P(x)>0}$ of conditional distributions of $Y$ and $Z$ given~$X=x$ for those
  $x\in\Xcal$ with~$P(X=x)>0$ is a linear
  bijection. 
\end{lemma}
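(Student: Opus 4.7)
The plan is to verify, in order, that $\pi_P$ is well-defined, linear, and possesses a two-sided linear inverse; the argument is essentially bookkeeping with conditional distributions.

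First, for well-definedness I would observe that any $Q\in\Delta_P$ has the same $X$-marginal as $P$: summing the equality $Q(X=x,Y=y)=P(X=x,Y=y)$ over $y\in\Ycal$ gives $Q(X=x)=P(X=x)$. Hence for each $x$ with $P(x)>0$ the conditional distribution
\begin{equation*}
  Q(y,z\mid X=x) := \frac{Q(x,y,z)}{P(x)}
\end{equation*}
is a genuine probability distribution on $\Ycal\times\Zcal$. Its $\Ycal$-marginal is $Q(X=x,Y=y)/P(x)=P(X=x,Y=y)/P(x)=P(Y=y\mid X=x)$, and similarly for the $\Zcal$-marginal; so it lies in $\Delta_{P,x}$. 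Linearity of $\pi_P$ is immediate from the formula above: on the ambient space $\Rb^{\Xcal\times\Ycal\times\Zcal}$ the map is just coordinate-wise rescaling by $1/P(x)$ on each $x$-slab.

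Next I would construct the inverse. Given a family $(Q_x)_{x\in\Xcal:P(x)>0}$ with $Q_x\in\Delta_{P,x}$, set
\begin{equation*}
  \sigma((Q_x)_x)(x,y,z) :=
  \begin{cases}
    P(x)\,Q_x(y,z), & \text{if }P(x)>0,\\
    0, & \text{if }P(x)=0.
  \end{cases}
\end{equation*}
Non-negativity is clear, and summing yields $\sum_{x,y,z}P(x)Q_x(y,z)=\sum_{x:P(x)>0}P(x)=1$, so $\sigma((Q_x)_x)\in\Delta$. Its $(X,Y)$-marginal equals $P(x)\sum_{z}Q_x(y,z)=P(x)P(Y=y\mid X=x)=P(X=x,Y=y)$ for $P(x)>0$, and vanishes together with the corresponding $P$-marginal when $P(x)=0$; similarly for $(X,Z)$. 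Hence $\sigma((Q_x)_x)\in\Delta_P$, and $\sigma$ is obviously linear.

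Finally I would verify that $\sigma\circ\pi_P$ and $\pi_P\circ\sigma$ are the identities. The composition $\sigma\circ\pi_P$ sends $Q$ to the distribution which on $\{x\}\times\Ycal\times\Zcal$ equals $P(x)\cdot Q(x,y,z)/P(x)=Q(x,y,z)$ when $P(x)>0$, and is zero otherwise; but the constraint $Q\in\Delta_P$ already forces $Q(x,y,z)=0$ whenever $P(x)=0$, so $\sigma\circ\pi_P=\mathrm{id}$. The other composition is immediate from the definitions. There is no real obstacle in the argument; the only point that requires a moment of care is the convention on fibers with $P(x)=0$, which is harmless because the constraint $Q\in\Delta_P$ already zeroes out those fibers, so those indices simply do not contribute to the Cartesian product.
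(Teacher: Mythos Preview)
Your proof is correct and follows essentially the same approach as the paper: the paper simply notes that the relation $Q(X=x,Y=y,Z=z)=P(X=x)\,Q(Y=y,Z=z\mid X=x)$ yields both injectivity and surjectivity of $\pi_P$, which is exactly your map $\sigma$ made explicit. Your version is more detailed (in particular in handling the fibers with $P(x)=0$), but the underlying argument is the same.
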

\begin{proof}
  The image of $\pi_{P}$ is contained in $\bigtimes_{x\in\Xcal:P(x)>0}\Delta_{P,x}$ by definition of~$\Delta_{P}$.  The
  relation
  \begin{equation*}
    Q(X=x,Y=y,Z=z) = P(X=x)Q(Y=y,Z=z|X=x)
  \end{equation*}
  shows that $\pi_{P}$ is injective and surjective.  Since $\pi_{P}$ is in fact a linear map, the domain and the
  codomain of $\pi_{P}$ are affinely equivalent.
\end{proof}
Each Cartesian factor $\Delta_{P,x}$ of $\Delta_{P}$ is a fibre polytope of the independence model.
\begin{cor}
  \label{cor:X-is-YZ:DP-empty}
  If $X=(Y,Z)$, then $\Delta_{P}=\{P\}$.
\end{cor}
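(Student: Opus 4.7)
The plan is to invoke Lemma~\ref{lem:polytope-DeltaP}, which reduces the description of $\Delta_{P}$ to a Cartesian product of the fibre polytopes $\Delta_{P,x}$ indexed by $x\in\Xcal$ with $P(x)>0$. It therefore suffices to show that, under the hypothesis $X=(Y,Z)$, each factor $\Delta_{P,x}$ is a singleton.

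So first I would unpack the hypothesis. When $X=(Y,Z)$, every $x\in\Xcal$ with $P(x)>0$ has the form $x=(y_{0},z_{0})$ for some $y_{0}\in\Ycal$, $z_{0}\in\Zcal$, and the conditional distribution of $Y$ given $X=(y_{0},z_{0})$ is the point mass $\delta_{y_{0}}$, because $Y$ is recoverable deterministically from $X$; likewise the conditional of $Z$ given $X=(y_{0},z_{0})$ is $\delta_{z_{0}}$. Thus $\Delta_{P,(y_{0},z_{0})}$ is by definition the set of joint distributions on $\Ycal\times\Zcal$ whose $Y$-marginal is $\delta_{y_{0}}$ and whose $Z$-marginal is $\delta_{z_{0}}$.

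The second step is the observation that any joint distribution on $\Ycal\times\Zcal$ with point-mass marginals is itself the point mass at the corresponding pair: if $Q(Y=y)=0$ for $y\ne y_{0}$ then $Q(Y=y,Z=z)=0$ for all $z$ and all $y\ne y_{0}$, and similarly in the other coordinate, so $Q=\delta_{(y_{0},z_{0})}$. Hence $\Delta_{P,x}$ is a single point for every $x$ with $P(x)>0$. Via the bijection $\pi_{P}$ of Lemma~\ref{lem:polytope-DeltaP}, this forces $\Delta_{P}$ to be a single point, which must then equal $P$ itself.

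There is no substantial obstacle here; the only thing to be a little careful about is that $\pi_{P}$ only sees the fibres over $x$ with $P(x)>0$, but that is harmless because $Q\in\Delta_{P}$ implies $Q(X=x)=P(X=x)=0$ on the complement, so the remaining conditional distributions play no role in determining $Q$.
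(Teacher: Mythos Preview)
Your proof is correct and follows essentially the same route as the paper's: both use Lemma~\ref{lem:polytope-DeltaP} to factor $\Delta_{P}$ as $\prod_{x}\Delta_{P,x}$, observe that each conditional $P(Y|X=x)$ and $P(Z|X=x)$ is a point mass, and conclude that every factor $\Delta_{P,x}$ is a singleton. You simply spell out in more detail why point-mass marginals force the joint to be a point mass, and you explicitly address the $P(x)=0$ states, which the paper leaves implicit.
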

\begin{proof}
  By assumption, both conditional probability distributions $P(Y|X=x)$ and $P(Z|X=x)$ are point measures.  Therefore,
  each factor $\Delta_{P,x}$ consists of a single point; namely the conditional distribution $P(Y,Z|X=x)$ of $Y$ and $Z$
  given~$X$.  Hence, $\Delta_{P}$ is a singleton.
\end{proof}

\subsection{The critical equations}
\label{sec:crit-eq}

\begin{lemma}
  \label{lem:crit-eq}
  The derivative of $MI_{Q}(X:(Y,Z))$ in the direction $\gamma_{x;y,y';z,z'}$ is
  \begin{equation*}
    \log\frac{Q(x,y,z)Q(x,y',z')}{Q(x,y',z)Q(x,y,z')}\frac{Q(y',z)Q(y,z')}{Q(y,z)Q(y',z')}.
  \end{equation*}
  Therefore, $Q$ solves the optimization problems of Lemma~\ref{lem:opt-prob} if and only if
  \begin{equation}
    \label{eq:crit-ineq}
    \log\frac{Q(x,y,z)Q(x,y',z')}{Q(x,y',z)Q(x,y,z')}\frac{Q(y',z)Q(y,z')}{Q(y,z)Q(y',z')} \ge 0
  \end{equation}
  for all $x,y,y',z,z'$ with $Q + \epsilon\gamma_{x;y,y';z,z'}\in\Delta_{P}$ for $\epsilon>0$ small enough.
\end{lemma}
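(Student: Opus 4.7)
The plan is to split the lemma into two parts: first derive the directional derivative formula, and then combine it with convexity to obtain the optimality characterization.

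For the first part, I would use the identity $MI_Q(X:(Y,Z)) = H_Q(X) + H_Q(Y,Z) - H_Q(X,Y,Z)$. Since $Q \in \Delta_P$ forces the $X$-marginal to equal $P$'s, the term $H_Q(X)$ is constant along any direction in $\ker A$, so only the last two entropies contribute. The directional derivative of $H(q) = -\sum q_i \log q_i$ in a direction $\gamma$ with $\sum_i \gamma_i = 0$ is $-\sum_i \gamma_i \log q_i$, since the additive $-1$ from differentiating $q\log q$ integrates to zero against~$\gamma$. I would then plug in $\gamma = \gamma_{x;y,y';z,z'}$, which is supported on four points with coefficients $\pm 1$, to get
\begin{equation*}
\partial_\gamma H_Q(X,Y,Z) = -\log\frac{Q(x,y,z)Q(x,y',z')}{Q(x,y',z)Q(x,y,z')}.
\end{equation*}
The $(Y,Z)$-marginal of $\gamma_{x;y,y';z,z'}$, obtained by summing over $x$, is the analogous four-point signed measure $\delta_{y,z}+\delta_{y',z'}-\delta_{y',z}-\delta_{y,z'}$, which yields an analogous expression for $\partial_\gamma H_Q(Y,Z)$. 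Subtracting and regrouping gives exactly the stated formula.

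For the optimality characterization, I would invoke the convexity of $MI_Q(X:(Y,Z))$ on~$\Delta_P$ established in Lemma~\ref{lem:opt-prob}. A standard first-order criterion says that a point $Q \in \Delta_P$ minimizes a convex differentiable function on a convex polytope if and only if the one-sided directional derivative is non-negative along every feasible direction at~$Q$. Necessity of~\eqref{eq:crit-ineq} is then immediate: whenever $Q + \epsilon \gamma_{x;y,y';z,z'} \in \Delta_P$ for small $\epsilon > 0$, this is a feasible direction and the computed derivative must be non-negative.

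For sufficiency, I would argue that any feasible direction $\gamma$ at $Q$ lies in $\ker A$, which is spanned by the vectors $\gamma_{x;y,y';z,z'}$ by Lemma~\ref{lem:defect-A}. Using the relation $\gamma_{x;y',y;z,z'} = -\gamma_{x;y,y';z,z'}$, and the fact that these generators have disjoint supports in the $x$-coordinate and well-understood supports in $(y,z)$, one can decompose any feasible $\gamma$ into a non-negative combination of feasible $\gamma_{x;y,y';z,z'}$'s; then linearity of the directional derivative finishes the argument. The main obstacle I foresee is precisely this last decomposition step at boundary points of $\Delta_P$ where some entries $Q(x,y,z) = 0$: one must verify that the combinatorics of $\ker A$ and of the facets of $\Delta_P$ defined by $\{Q(x,y,z) \geq 0\}$ are compatible enough that feasible generators alone span the feasible cone, rather than needing cancellation among non-feasible ones.
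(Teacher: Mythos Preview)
Your derivative computation via $MI_Q(X:(Y,Z))=H_Q(X)+H_Q(Y,Z)-H_Q(X,Y,Z)$ is correct and is exactly what the paper's ``direct computation'' means; the paper's entire proof is that one sentence, and it does not spell out the optimality characterization at all.

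The obstacle you flag for sufficiency is genuine, and the decomposition you sketch does \emph{not} go through as a purely polyhedral argument. Already in a single factor $\Delta_{P,x}$ (a transportation polytope) with $|\Ycal|,|\Zcal|\ge 3$, a vertex can have an edge direction given by a $6$-cycle, and this extreme ray of the tangent cone is not a nonnegative combination of feasible $4$-cycle vectors~$\gamma_{x;y,y';z,z'}$: every feasible $4$-cycle is nonnegative on each zero-coordinate of~$Q$, so no nonnegative combination can match a direction that is strictly positive on one zero-coordinate while vanishing on another zero-coordinate that the only contributing generator also hits. What closes the gap is a property of the objective rather than of the polytope: if $Q$ has a non-forced zero $Q(x,y,z)=0$ with $P(x,y),P(x,z)>0$, then picking $y',z'$ with $Q(x,y',z),Q(x,y,z')>0$ yields a feasible generator along which the one-sided derivative of $MI_Q(X:(Y,Z))$ is $-\infty$ (with a residual subcase $Q(y,z)=0$ to be handled separately). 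Thus any $Q$ satisfying~\eqref{eq:crit-ineq} for all feasible generators must lie in the relative interior of the face of~$\Delta_P$ cut out by the forced zeros; there both signs of every generator are feasible, \eqref{eq:crit-ineq} becomes an equality, the gradient vanishes on all of~$\ker A$, and convexity gives the global minimum. None of this appears in the paper, so your write-up together with this boundary analysis would in fact be more complete than the original.
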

\begin{proof}
  The proof is by direct computation.
\end{proof}

\begin{ex}[The AND-example]
  Consider the binary case $\Xcal=\Ycal=\Zcal=\{0,1\}$, assume that $Y$ and $Z$ are independent and uniformly
  distributed, and suppose that $X=Y\operatorname{AND}Z$.  The underlying distribution $P$ is uniformly distributed on
  the four states $\{000,001,010,111\}$.  In this case, $\Delta_{P,1}=\{\delta_{Y=1,Z=1}\}$ is a singleton, and
  $\Delta_{P,0}$ consists of all probability distributions $Q$ of the form
  \begin{equation*}
    Q(Y=y,Z=z) =
    \begin{cases}
      \frac13 + \alpha', & \text{ if }(y,z) = (0,0), \\
      \frac13 - \alpha', & \text{ if }(y,z) = (0,1), \\
      \frac13 - \alpha', & \text{ if }(y,z) = (1,0), \\
         \alpha', & \text{ if }(y,z) = (1,1),
    \end{cases}
  \end{equation*}
  for some~$0\le\alpha'\le\frac13$.  Therefore, $\Delta_{P}$ is a one-dimensional polytope consisting of all probability
  distributions of the form
  \begin{equation*}
    Q_{\alpha}(Y=y,Z=z) =
    \begin{cases}
      \frac14 + \alpha, & \text{ if }(x,y,z) = (0,0,0), \\
      \frac14 - \alpha, & \text{ if }(x,y,z) = (0,0,1), \\
      \frac14 - \alpha, & \text{ if }(x,y,z) = (0,1,0), \\
         \alpha,        & \text{ if }(x,y,z) = (0,1,1), \\
         \frac14,       & \text{ if }(x,y,z) = (1,1,1), \\
         0,             & \text{ else,}
    \end{cases}
  \end{equation*}
  for some~$0\le\alpha\le\frac14$.  To compute the minimum of $MI_{Q_{\alpha}}(X:(Y,Z))$ over $\Delta_{P}$, we compute
  the derivative with respect to~$\alpha$ (which equals the directional derivative of $MI_{Q}(X:(Y,Z))$ in the direction
  $\gamma_{0;0,1;0,1}$ at $Q_{\alpha}$) and obtain:
  \begin{equation*}
    \log\frac{(\frac14 + \alpha)\alpha}{(\frac14 - \alpha)^{2}}\frac{(\frac14 - \alpha)^{2}}{(\frac14 + \alpha)^{2}}
    = \log\frac{\alpha}{\frac14 + \alpha}.
  \end{equation*}
  Since $\frac{\alpha}{\frac14 + \alpha}<1$ for all~$\alpha>0$, the function $MI_{Q_{\alpha}}(X:(Y,Z))$ has a unique
  minimum at~$\alpha=\frac14$.  Therefore,
  \begin{align*}
    \TUI(X:Y\setminus Z)  & = MI_{Q_{1/4}}(X:Y|Z) = 0 = \TUI(X:Z\setminus Y), \\
    \TSI(X:Y;Z) &= CoI_{Q_{1/4}}(X;Y;Z)
                 = MI_{Q_{1/4}}(X:Y)
                 = \frac34\log\frac43, \\
    \TCI(X:Y;Z) &= MI(X:(Y,Z)) - MI_{Q_{1/4}}(X:(Y,Z))
                 = \frac12\log2.
  \end{align*}
  In other words, in the AND-example there is no unique information, but only shared and synergistic information.  This
  follows, of course, also from Corollary~\ref{cor:ident-dist}.
\end{ex}
\begin{ex}
  \label{ex:illconditioned}
  The optimization problems in Lemma~\ref{lem:opt-prob} can be very ill-conditioned, in the sense that there are
  directions in which the function varies fast, and other directions in which the function varies slowly.  As an
  example, consider the example where $P$ is the distribution of three i.i.d.~uniform binary random variables.  In this
  case, $\Delta_{P}$ is a square.  Figure~\ref{fig:illcond} contains a heat map of the function~$CoI_{Q}$
  on~$\Delta_{P}$, where $\Delta_{P}$ is parametrized by
  \begin{equation*}
    Q(a,b) = P + a \gamma_{0;0,1;0,1} + b \gamma_{1;0,1;0,1},
    \qquad
    -\frac18 \le a \le \frac18,
    -\frac18 \le b \le \frac18.
  \end{equation*}
  Clearly, the function varies very little along one of the diagonals.  In fact, along this diagonal, $X$ is independent
  of $(Y,Z)$, corresponding to a very low synergy.

  Although in this case the optimising probability distribution $Q_{P}$
  is unique, it can be difficult to find.  For example, Mathematica's function \verb|FindMinimum| does not always find
  the true optimum out of the box (apparently, \verb|FindMinimum| cannot make use of the convex structure in the
  presence of constraints)~\cite{Mathematica80}.

  \begin{figure}
    \centering
    \begin{tikzpicture}
      \begin{axis}[xlabel={$a$}, ylabel={$b$}, ylabel style={rotate=-90}, width=0.6\linewidth, height=0.6\linewidth,
        xmin=-0.127, xmax=0.127, ymin=-0.127, ymax=0.127, 
        xticklabel style={/pgf/number format/.cd,fixed,precision=5}, yticklabel style={/pgf/number format/.cd,fixed,precision=5}]
        \addplot graphics [xmin=-0.125,xmax=0.125,ymin=-0.125,ymax=0.125] {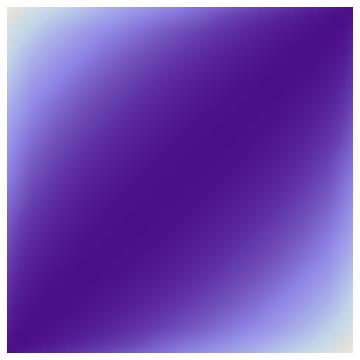};
      \end{axis}
      \ifEntropy  
      \node at (11,4.4) {\pgftext{\includegraphics{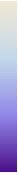}}};
      \node at (11,7.8) {large $\TCI$, low $CoI$};
      \node at (11,1.0) {low $\TCI$, large $CoI$};
      \else
      \node at (10,3.8) {\pgftext{\includegraphics{data/Palette}}};
      \node at (10,7.2) {large $\TCI$, low $CoI$};
      \node at (10,0.4) {low $\TCI$, large $CoI$};
      \fi
    \end{tikzpicture}
    \caption{The function $CoI_Q$ in Example~\ref{ex:illconditioned} (figure created with the help of
      Mathematica~\cite{Mathematica80}).  Darker colours indicate larger values of~$CoI_{Q}$.  In this example,
      $\Delta_{P}$ is a square.  The uniform distribution lies at the centre of this square and is the maximum
      of~$CoI_{Q}$.  In the two dark corners, $X$ is independent of $Y$ and $Z$, and either $Y=Z$ or $Y=\neg Z$.  In
      the two light corners, $Y$ and $Z$ are independent, and either $X=Y\XOR Z$ or $X=\neg(Y\XOR Z)$.}
    \label{fig:illcond}
  \end{figure}
\end{ex}

\subsection{Technical proofs}
\label{sec:proofs}

\begin{proof}[Proof of Lemma~\ref{lem:TSI-zero}]
  Since $\TSI(X:Y;Z)\ge CoI_{Q_{0}}(X;Y;Z)\ge 0$, if $\TSI(X:Y;Z)=0$, then
  \begin{equation*}
    0 = CoI_{Q_{0}}(X;Y;Z) = MI_{Q_{0}}(Y:Z) - MI_{Q_{0}}(Y:Z|X) = MI_{Q_{0}}(Y:Z).
  \end{equation*}
  To show that $MI_{Q_{0}}(Y:Z)=0$ is also sufficient, observe that
  \begin{equation*}
    Q_{0}(x,y,z)Q_{0}(x,y',z') = Q_{0}(x,y,z')Q_{0}(x,y',z),
  \end{equation*}
  by construction of $Q_{0}$, and that
  \begin{equation*}
    Q_{0}(y,z)Q_{0}(y',z') = Q_{0}(y,z')Q_{0}(y',z)
  \end{equation*}
  by the assumption that $MI_{Q_{0}}(Y:Z)=0$.  Therefore, by Lemma~\ref{lem:crit-eq}, all partial derivatives vanish
  at~$Q_{0}$.  Therefore, $Q_{0}$ solves the optimization problems in Lemma~\ref{lem:opt-prob}, and
  $\TSI(X:Y;Z)=CoI_{Q_{0}}(X;Y;Z)=0$.
\end{proof}

\begin{proof}[Proof of Lemma~\ref{lem:product-system}]
  Let $Q_{1}$ and $Q_{2}$ be solutions of the optimization problems from Lemma~\ref{lem:opt-prob} for $(X_{1},Y_{1},Z_{1})$
  and $(X_{2},Y_{2},Z_{2})$ in the place of $(X,Y,Z)$, respectively.  Consider the probability distribution~$Q$ defined
  by
  \begin{equation*}
    Q(x_{1},x_{2},y_{1},y_{2},z_{1},z_{2}) = Q_{1}(x_{1},y_{1},z_{1})Q_{2}(x_{2},y_{2},z_{2}).
  \end{equation*}
  Since $(X_{1},Y_{1},Z_{1})$ is independent of $(X_{2},Y_{2},Z_{2})$ (under~$P$), $Q\in\Delta_{P}$.  We show that $Q$
  solves the optimization problems from Lemma~\ref{lem:opt-prob} for
  $X=(X_{1},X_{2})$, $Y=(Y_{1},Y_{2})$ and $Z=(Z_{1},Z_{2})$.
  We use the notation from Appendix~\ref{sec:computing}.

  If $Q + \epsilon\gamma_{x_{1}x_{2};y_{1}y_{2},y'_{1}y'_{2};z_{1}z_{2},z'_{1}z'_{2}}\in\Delta_{Q}$, then
  \begin{equation*}
    Q_{1} + \epsilon\gamma_{x_{1};y_{1},y'_{1};z_{1},z'_{1}}\in\Delta_{Q_{1}}
    \quad\text{ and }\quad
    Q_{2} + \epsilon\gamma_{x_{2};y_{2},y'_{2};z_{2},z'_{2}}\in\Delta_{Q_{2}}.
  \end{equation*}
  Therefore, by Lemma~\ref{lem:crit-eq},
  \begin{align*}
    \log&\frac{Q(x_1x_2,y_1y_2,z_1z_2)Q(x_1x_2,y'_1y'_2,z'_1z'_2)}{Q(x_1x_2,y'_1y'_2,z_1z_2)Q(x_1x_2,y_1y_2,z'_1z'_2)}\frac{Q(y'_1y'_2,z_1z_2)Q(y_1y_2,z'_1z'_2)}{Q(y_1y_2,z_1z_2)Q(y'_1y'_2,z'_1z'_2)}
    \\ &
    = \log\frac{Q(x_1,y_1,z_1)Q(x_1,y'_1,z'_1)}{Q(x_1,y'_1,z_1)Q(x_1,y_1,z'_1)}\frac{Q(y'_1,z_1)Q(y_1,z'_1)}{Q(y_1,z_1)Q(y'_1,z'_1)}
    \\ & \quad
    + \log\frac{Q(x_2,y_2,z_2)Q(x_2,y'_2,z'_2)}{Q(x_2,y'_2,z_2)Q(x_2,y_2,z'_2)}\frac{Q(y'_2,z_2)Q(y_2,z'_2)}{Q(y_2,z_2)Q(y'_2,z'_2)}
    \ge 0,
  \end{align*}
  and hence, again by Lemma~\ref{lem:crit-eq}, $Q$ is a critical point and solves the optimization problems.
\end{proof}

\begin{proof}[Proof of Lemma~\ref{lem:IHSP-zero}]
  We use the notation from~\cite{HarderSalgePolani13:Bivariate_redundant_information}.
  The information divergence is jointly convex.  Therefore, any critical point of the divergence restricted to a convex
  set is a global minimizer.  Let $y\in\Ycal$.  Then it suffices to show: If $P$ satisfies the two conditional
  independence statements, then the marginal distribution $P_{X}$ of $X$ is a critical point of $D(P(\cdot|y)\|Q)$ for
  $Q$ restricted to $C_{\text{cl}}(\langle Z\rangle_{X})$; for if this statement is true, then $P_{y\searrow Z}=P_{X}$, thus
  $I^{\pi}_{X}(Y\searrow Z)=0$, and finally~$\IHSP(X:Y;Z)=0$.

  Let $z,z'\in\Zcal$.  The derivative of $D(P(\cdot|y)\|Q)$ at $Q=P_{X}$ in the direction $P(X|z) - P(X|z')$ is
  \begin{equation*}
    \sum_{x\in\Xcal}(P(x|z) - P(x|z'))\frac{P(x|y)}{P(x)}
    = \sum_{x\in\Xcal}\left(\frac{P(x,z)P(x,y)}{P(x)P(y)P(z)} - \frac{P(x,z')P(x,y)}{P(x)P(y)P(z')}\right).
  \end{equation*}
  Now, $\ind{Y}{Z}[X]$ implies
  \begin{equation*}
    \sum_{x\in\Xcal}\frac{P(x,z)P(x,y)}{P(x)}=P(y,z)
    \quad\text{ and }\quad
    \sum_{x\in\Xcal}\frac{P(x,z')P(x,y)}{P(x)}=P(y,z'),
  \end{equation*}
  and $\ind{Y}{Z}$ implies $P(y)P(z)=P(y,z)$ and $P(y)P(z')=P(y,z')$.  Together, this shows that $P_{X}$ is a critical
  point.
\end{proof}

\ifEntropy
\acknowledgements{Acknowledgements}
\else
\subsection*{Acknowledgements}
\fi

NB is supported by the Klaus Tschira Stiftung. JR acknowledges support from the VW Stiftung. EO has received funding
from the European Community's Seventh Framework Programme (FP7/2007-2013) under grant agreement no.~258749 (CEEDS) and
no.~318723 (MatheMACS). We thank Christoph Salge and Daniel Polani for fruitful discussions, and we thank Ryan James for
helpful comments on the manuscript.

\ifEntropy
\conflictofinterests{Conflicts of Interest}
The authors declare no conflicts of interest.
\fi

\ifEntropy
\bibliographystyle{mdpi}
\else
\bibliographystyle{plain}
\fi
\bibliography{Blackwell}

\begin{thebibliography}{10}

\bibitem{BR13:Blackwell_relation_and_zonotopes}
Nils Bertschinger and Johannes Rauh.
\newblock The {B}lackwell relation defines no lattice.
\newblock {\em arXiv:1401.3146}, 2014.

\bibitem{BROJ13:Shared_information}
Nils Bertschinger, Johannes Rauh, Eckehard Olbrich, and J\"{u}rgen Jost.
\newblock Shared information--new insights and problems in decomposing
  information in complex systems.
\newblock In {\em Proceedings of the ECCS 2012}, pages 251--269. Springer,
  2013.

\bibitem{Blackwell53:Equivalent_comparisons_of_experiments}
David Blackwell.
\newblock Equivalent comparisons of experiments.
\newblock {\em The Annals of Mathematical Statistics}, 24(2):265--272, 1953.

\bibitem{GriffithKoch13:Quantifying}
Virgil Griffith and Christof Koch.
\newblock Quantifying synergistic mutual information.
\newblock {\em arXiv:1205.4265}, 2013.

\bibitem{HarderSalgePolani13:Bivariate_redundant_information}
Malte Harder, Christoph Salge, and Daniel Polani.
\newblock Bivariate measure of redundant information.
\newblock {\em Phys. Rev. E}, 87:012130, 2013.

\bibitem{HostenSullivant02:Reducible_and_Cyclic_Models}
Serkan Ho\c{s}ten and Seth Sullivant.
\newblock Gr\"{o}bner bases and polyhedral geometry of reducible and cyclic
  models.
\newblock {\em Journal of Combinatorial Theory: Series A}, 100(2):277--301,
  2002.

\bibitem{LathamNirenberg05:Synergy_and_redundancy_revisited}
Peter~E. Latham and Sheila Nirenberg.
\newblock Synergy, redundancy, and independence in population codes, revisited.
\newblock {\em Journal of Neuroscience}, 25(21):5195--5206, 2005.

\bibitem{McGill54:Multivariate_information_transmission}
William~J. McGill.
\newblock Multivariate information transmission.
\newblock {\em Psychometrika}, 19(2):97--116, 1954.

\bibitem{SchneidmanBialekBerry03:Synergy_and_redundancy_in_population_codes}
Elad Schneidman, William Bialek, and Michael J.~II Berry.
\newblock Synergy, redundancy, and independence in population codes.
\newblock {\em Journal of Neuroscience}, 23(37):11539--11553, 2003.

\bibitem{WilliamsBeer:Nonneg_Decomposition_of_Multiinformation}
Paul Williams and Randall Beer.
\newblock Nonnegative decomposition of multivariate information.
\newblock {\em arXiv:1004.2515v1}, 2010.

\bibitem{Mathematica80}
{Wolfram Research, Inc.}
\newblock {\em Mathematica}.
\newblock Wolfram Research, Inc., Champaign, Illinois, 8.0 edition, 2003.

\end{thebibliography}

\end{document}